\def\max{{\rm max}}
\def\min{{\rm min}}
\begin{document}

\title{Rendezvous on a Line by Location-Aware Robots Despite the Presence of Byzantine Faults
%(Extended Abstract)}
\protect\footnote{This is the full version of the paper which will appear in Algosensors 2017 (The 13th International
Symposium on Algorithms and Experiments for Wireless Networks), Sep
7-8, Vienna, Austria.}}

\author{
Huda Chuangpishit\inst{1}\inst{2}
\and
Jurek Czyzowicz\inst{1}\inst{4}
\and
Evangelos Kranakis\inst{2}\inst{4}
\and
Danny Krizanc\inst{3}
}

\institute{
D\'{e}partemant d'informatique, Universit\'{e} du Qu\'{e}bec en Outaouais,  
Canada.
\and
School of Computer Science, Carleton University, Ottawa, Ontario, Canada.
\and
Department of Mathematics \& Computer Science, Wesleyan University, 
Middletown CT, USA.
\and
Research supported in part by NSERC Discovery grant.
}
\maketitle

\begin{abstract}

A set of mobile robots is placed at points of an infinite line. The robots are equipped with GPS devices and they may communicate their positions on the line to a central authority. The collection contains an unknown  subset of ``spies", i.e., byzantine robots, which are indistinguishable from the non-faulty ones.   The set of the non-faulty robots need to rendezvous in the shortest possible time in order to perform some task, while the byzantine robots may try to delay their rendezvous for as long as possible. The problem facing a central authority is to determine trajectories for all robots so as to minimize the time until the non-faulty robots have rendezvoused. The trajectories must be determined without knowledge of which robots are faulty.  Our goal is to minimize the competitive ratio between the time required to achieve the first rendezvous of the non-faulty robots and the time required for such a rendezvous to occur under the assumption that the faulty robots are known at the start. We provide a bounded competitive ratio algorithm, where the central authority is informed only of the set of initial robot positions, without knowing which ones or how many of them are faulty. When  an upper bound on the number of byzantine robots is known to the central authority, we provide algorithms with better competitive ratios. In some instances we are able to show these algorithms are optimal.

\vspace{0.5cm}
\noindent
{\bf Key words and phrases.} Competitive ratio, Faulty, GPS, Line, Rendezvous, Robot.

\end{abstract}

%\newpage

\section{Introduction}

Rendezvous is useful for cooperative  control in a distributed system, either when communication between distributed entities is restricted by range limitations or when it is required to speed up information exchanges in a distributed system. It is often presented as a consensus problem in which the agents have to agree on the meeting point and time (see~\cite{olfati2007consensus}) where by consensus we mean reaching an agreement regarding a certain quantity of interest that depends on the state of all the agents. 

In this paper we consider the following version of the rendezvous problem. A population of mobile robots is distributed at points of an infinite line. The robots are equipped with GPS devices and
are able to communicate their initial positions to a central authority. In order to perform some task, that the central authority shall assign to the robots, all of the non-faulty robots need to rendezvous (meet at the same point of the line). For this reason, the robots send to the central authority the coordinates of their positions on the line and the central authority assigns to each of them a route which eventually results in the rendezvous of all robots. A group of robots may attempt the task at any time in order to determine if
all of the non-faulty robots have been brought together. 

Unfortunately, an adversary has infected the population with ``spies" - a collection of 
byzantine faulty robots, indistinguishable from the original ones, in order to delay the performance of
the task for as long as possible. A byzantine robot may fail to report its position, report a wrong position
or it may fail to follow its assigned route. Furthermore, a faulty robot may fail to help in performing
the required task. As the central authority does not know the identity of the faulty robots it 
broadcasts travel instructions to all the robots.

We would like to define the strategy resulting in the smallest possible time of the rendezvous of
all non-faulty robots. 
%The task is generic in that any sufficiently large subset of non-faulty robots
%is able to perform it. 
Our goal is to minimize the competitive ratio between the time required to achieve this first rendezvous of the non-faulty robots and the time required for such a rendezvous to occur under the assumption that the faulty robots are known at the start. 

\subsection{Our Model}

%Throughout the paper we use the terms agent and robot interchangeably.
A collection of $n$ anonymous robots travel along a Cartesian line with maximum unit speed.  Robots are equipped with GPS devices, so each of them is aware of the coordinate of its current position on the line. An unknown subset of  $f$ robots may turn out to be faulty.
%The robots have unique identifiers which allow them to be distinguished by a central authority (CA). 
At some
point in time, a task is identified that requires the coming together of all of the 
non-faulty robots at the same point
on the line and this fact is broadcast to the robots by a central authority (CA). The robots stop what they are doing
and report their positions to 
the CA. The CA computes trajectories for each of the robots and instructs them how to time their movement. 

At this point the robots follow the trajectories provided. The movement of the robots continues until such time
as all of the non-faulty robots meet for the first time and are able to perform
the task, which ends the algorithm. 
%(If $n-f$ robots are unable to perform the task, the algorithm ends
%with all non-faulty robots having rendezvoused.)
We assume the time required to attempt the
task is negligible in comparison to the time required for the robots to move between points. (As an
example, imagine that the robots have chip cards, that are used to open a
container carried by all robots. Using a secret-sharing scheme, the container is set to open only
if $n-f$ or more of the keys are valid.) A failed attempt at the task may or may not identify those
robots that are faulty (caused the attempt to fail). If identified as faulty, a robot need not continue on its trajectory.  A successful attempt at the task means that all non-faulty robots are present and
this is recognized by them and the central authority. 

As stated, we assume that the robots report their correct locations at the beginning of the algorithm. 
We note that this need only be true of the non-faulty robots as in the worst case the robots could
be anywhere and the algorithm must bring together all of them. It is
possible that faulty robots may report initial locations that are incorrect
and potentially adversely effect the lengths of the trajectories.
Of course, this may result in their receiving trajectories that they cannot complete without being
detected as faulty by the other robots. But as long as all non-faulty robots complete their trajectories
the algorithm must ensure that they meet.

The message to the CA about a robot's  position contains the robot's unique identity. We assume that the faulty robots cannot lie about their identity. Consequently, each faulty robot can send only one message about its position, otherwise it will be identified as faulty and ignored. Observe that, as the robot's identity, the CA could use the position communicated by the robots, and thus our approach could be extended to  anonymous robots. This would require some extra conditions on the model (e.g., message uniqueness), so, for simplicity, we assume that our robots have unique identifiers.

We also assume that after the initial reporting of their positions, until the reporting of success with
the task, there is no further communication between the robots themselves or the robots and the
central authority. Again, this need only be true of the non-faulty robots. Any communication by
robots during the execution of the trajectories is assumed to come from faulty robots and is
ignored. 

We note that the requirement of a central authority may be removed by allowing the robots to 
broadcast their initial positions to all other robots and each computing the same set of trajectories
using the same algorithm. 
%Also observe that, despite the robots being assumed anonymous, they 
%can be identified by the positions they report. Consequently, in this paper, we will sometimes identify
%the robots by integer numbers according to their increasing positions along the line.

%After the positions have been broadcast, we assume any other communication is either ignored or requires physical  proximity. 

%All broadcast messages are tagged by the identity of the robot. Further, a threshold secret sharing scheme can easily ensure that when (any) two robots meet the non-faulty robot can verify that the other robot is non-faulty. Initially the robots broadcast their $x$-coordinate and the broadcast can reach wirelessly all other robots in the system regardless of their distance from each other. The robots can move with maximum speed $1$ along the $x$-axis.  Also observe that faulty robots do not have to rendezvous with any other (faulty or non-faulty) robots.
%

A rendezvous algorithm specifies the trajectories of the robots as a function of time. We
assume the robots have sufficient memory to carry out the instructions of the rendezvous algorithm. The competitive ratio of a given  algorithm is the ratio of the time it takes the algorithm to enable rendezvous of 
all non-faulty robots divided by the time it takes the best off-line algorithm, with
knowledge of which robots are faulty, to accomplish the same. 
Note: the time of the offline algorithm equals $D/2$, where $D$ is the minimum diameter of the set of non-faulty robots. Indeed, these non-faulty robots could then meet at the mid point between the most distant ones in the set. 

We assume that the task is such that $n-f$ non-faulty robots are necessary and  sufficient to perform the task. Under this assumption, the task can be used to determine if all of the non-faulty robots are together. If a group of robots attempts the task and it succeeds, it contains all non-faulty robots. If it fails, then there exist more non-faulty robots outside the group. 
%{\bf WE DO NOT HAVE SUCH A CASE IN OUR PAPER: For some of our algorithms, if it were the case that fewer than $n-f$ non-faulty robots are needed to perform the task, the algorithm could discover this and stop earlier. In these cases, the competitive ratio we report is with respect to half the smallest diameter subset of non-faulty robots that can perform the task, i.e., the optimal offline time to perform the task. If the number of robots required to perform the task is greater than $n-f$ then the task can never be performed. In this case, we make no requirement on the competitive ratio of our algorithm as an offline algorithm would fail to perform the task as well. Of course, it would know that the number of non-faulty robots is insufficient and not attempt the task.  But we observe that even in this case, our algorithms will insure that all of the non-faulty robots rendezvous in time bounded by the competitive ratio times the time required for optimal rendezvous of the non-faulty robots and eventually all robots that correctly perform their trajectories will meet and stop, knowing that there are an insufficient number to perform the task (or the algorithms are easily modified to ensure this).}

Below we present algorithms which have no knowledge of $f$ as well as others where an upper
bound on $f$ is provided. Depending upon that knowledge, different algorithms can achieve 
a better competitive ratio in different situations. We restrict our attention to the nontrivial case
where at least two robots must rendezvous, i.e.,  $f  \leq n-2$. 

\subsection{Related Work}

The mobile agent rendezvous problem has been studied extensively in many topologies 
(or domains)
%~\cite{chalopin2007rendezvous,Czyzowicz2010,Czyzowicz2013}
and under various assumptions on system synchronicity 
and capabilities of the agents~\cite{%chalopin2007rendezvous,Czyzowicz2010,
Czyzowicz2013,marco2006asynchronous,dessmark2003deterministic,KranakisBook} both
as a dynamic symmetry breaking problem~\cite{YuDynamic} as well as in operations research~\cite{alpern1995rendezvous} in order to understand the limitations of search theory.  
%The problem has been studied in continuous~\cite{CzyzowiczTerrains}, or discrete domains~\cite{YuDynamic} and for specific network topologies, like rings~\cite{KranakisBook} or a two dimensional grid~\cite{Czyzowicz2010,kkm11}, but also for arbitrary networks~\cite{dpv13}. Rendezvous of two or more agents in the plane has been considered e.g., in~\cite{fpsw05,fsvy13}. 
%Deterministic rendezvous in networks has been surveyed in~\cite{pel12}.
%With respect to agents' capabilities, the rendezvous problem has been studied for example in~\cite{HegartyVariant} considering the scenario where agents can exchange arbitrary information with any agents in their ``line of sight'', in~\cite{SawchuckThesis} where they considered a weaker model in which the agents were restricted to dropping indistinguishable tokens to mark their current location, while~\cite{flocchini2004mobile} studied the case when tokens may fail. 
A critical distinction in the models is whether the agents must all run the same algorithm, which is generally known as the \emph{symmetric rendezvous problem} \cite{AlpernPerspective}. If agents can execute different algorithms, generally known as the \emph{asymmetric rendezvous problem}, then the problem is typically much easier, though not always trivial.

Closely related to our research is the work of  \cite{collins2011synchronous} and 
\cite{collins2010tell}. In \cite{collins2011synchronous} the authors study rendezvous of two anonymous agents, where each agent knows its own initial position in the environment, and the environment is a finite or infinite graph or a Euclidean space. They show that in the line and trees as well as in multi-dimensional Euclidean spaces and grids the agents can rendezvous in time $O(d)$, where $d$ is the distance between the initial positions of the agents. In \cite{collins2010tell} the authors study efficient rendezvous of two mobile agents moving asynchronously in the Euclidean 2d-space. Each agent has limited visibility, permitting it to see its neighborhood at unit range from its current location. Moreover, it is assumed that each agent knows its own initial position in the plane given by its coordinates. The agents, however, are not aware of each other's position. Also worth mentioning is the work of \cite{bampas2010almost} which studies the rendezvous problem of location-aware agents in the asynchronous case and whose proposed algorithm provides a route, leading to rendezvous.

%Rendezvous in a continuous setting
The underlying domain which is traversed by the robots is a continuous curve (in our case an infinite line) and the robots may exploit a particular characteristic, e.g., different identifiers, speeds, or their initial location,  to achieve rendezvous. For example, in several papers the robots make use of the fact that they have different speeds, as in the paper~\cite{DBLP:conf/icdcn/FeinermanKKR14}, as well as in the work on probabilistic rendezvous on a cycle~\cite{DBLP:conf/icdcn/KranakisKMS15}. Rendezvous on a cycle for multiple robots with different speeds is studied in \cite{DBLP:conf/adhoc-now/HuusK15},  and rendezvous in arbitrary graphs for two robots with different speeds in~\cite{SOFSEM17}.

There is also related work on gathering a collection of identical memoryless, mobile robots in one node of an anonymous  ring whereby robots start from different nodes of the ring and operate in Look-Compute-Move cycles and have to end up in the same node~\cite{klasing2008gathering}, as well as oblivious mobile robots in the same location of the plane when the robots have limited visibility~\cite{PGNP2005}. 

Fault tolerance has been extensively studied in distributed computing, though failures were usually related to static elements of the environment, like network nodes or links (e.g., see \cite{lamport,lynch}), rather than to the mobile components. The unreliability of robots has been studied with respect to inaccurate robots' sensing or mobility devices (cf. \cite{cohen-peleg-inaccurate-sensors,izumi-unreliable-compasses,souissi2006gathering}). Problems concerning faulty robots operating in a line environment have been studied in the context of searching in \cite{CzyzowiczKKNO16} and patrolling \cite{CzyzowiczGKKKT15}. The questions of convergence or gathering involving faulty robots were investigated in \cite{agmon2006fault,BPT,cohen-peleg-convergence,DGMP,dieudonne2014gathering}.
To the best of our knowledge the rendezvous problem for location aware robots some of which may be faulty has never been considered by the research community in the past. 
%The only comparable research is the main result of  \cite{collins2011synchronous} which also has implications for our main result in that on the line it accomplishes rendezvous in time $12d$, where $d$ is the initial distance of the robots. For our case where we include the possibility of  faulty robots (at distance $d$) with knowledge of their distance $d$, this yields a competitive ratio of at most $\frac{12d}{d/2} = 24$. Note that our general rendezvous algorithm for $n$ robots is shown to have competitive ratio $f+1$ (see Theorem~\ref{mainthm}), where $f \leq n-2$ is the number of faulty and $n$ is the total number of robots. As a consequence, our algorithm is not worse than the algorithm of \cite{collins2011synchronous} as long as the number of faults is such that $f \leq 12$.

%\begin{figure}[!htb]
%\begin{center}
%\includegraphics[width=10cm]{FIG/th1.pdf}
%\end{center}
%\caption{An arrangement of seats; moviegoers may enter only from the left
%and the numbering of the seats is $1$ to $n$ from left to right.}
%\label{fig:th1}
%\end{figure}

\subsection{Our Results}

Here is an outline of the results of the paper. In Section~\ref{general:sec} we consider two general rendezvous algorithms for $n>2$ robots with $f \leq n-2$ faulty ones. Both algorithms assume no knowledge of the actual value of $f$ and the second algorithm stops as soon as sufficiently many robots are 
available to perform the task. The competitive ratios of these algorithms are $f+1$ and $12$, respectively.
% working in  which assume  and prove that the competitive ratio of rendezvous for $n>2$ robots with $1\leq f \leq n-2$ faulty is at most $\min\{12, f+1\}$.  
%In Section~\ref{onefault:sec}
We also prove a lower bound of $2$ on the competitive ratio for arbitrary $n>2$ and $1\leq f \leq n-2$. 
In Section \ref{sec:bounded-CR} we provide algorithms for the case where the central 
authority possesses some knowledge concerning the number of faulty robots. 
%We are interested in the upper bound on the ratio of the number $f$
%of faulty robots versus the total number $n$ of robots. 
For the case where
the ratio of the number of faulty robots to the total number of robots is strictly less than 1/2 we provide an optimal algorithm and 
when this number is strictly less than 2/3 we give an algorithm that beats the general case
algorithms above unless $f$ is known to be less than 5.  
Next we provide optimal algorithms for the particular cases where $f \in \{1,2 \}$ in Section \ref{onefault:sec}. The main result here is the case of $n=4$ and $f=2$ where
we show the exact value of the competitive ratio is $1+\phi$, where $\phi$ is the golden ratio. 
We end with a discussion of open problems. 
%We improve the lower bound of 2 for the cases: four robots two of which are faulty, and five robots three of which are faulty. Indeed we prove a lower bound of $1+\phi$ for the former case and a lower bound of 3 for the latter one, where $\phi := \frac{1+\sqrt{5}}{2}$ is the golden ratio. Finally in
%Section~\ref{enumerate:sec} we consider the number of rendezvous algorithms. 
%Contrast these results with the main result of  \cite{collins2011synchronous} which implies that on the line rendezvous can be accomplished in time $12d$, where $d$ is the initial distance of the robots, thus yielding a competitive ratio of $24$, for the general case of $f$ faulty robots. 
%Details of missing proofs can be found in the appendix.

\section{General Results}
\label{general:sec}

In this section we present a rendezvous algorithm for $n$ robots $f$ of which are faulty with a competitive ratio of at most $\min \{ f+1, 12 \}$. Neither of these algorithms require prior knowledge of $f$.
We also show that the competitive ratio of any rendezvous algorithm is at least 2. We first observe that the assumptions of our model allow us to severely restrict the potential
algorithms available to the CA. We can show the following lemma:
%
%\subsection{Movement of Robots in a Rendezvous Algorithm}
%
%The following lemma defines some rules on the movement of the robots in any rendezvous algorithm we present in this paper.
%
%\subsection{Rendezvous algorithm with $f$ faulty robots}
\begin{lemma}
\label{lem:Opt-move-rules}
Consider a rendezvous algorithm $A$ for $n$ robots $f$ of which are faulty with competitive ratio $\alpha$. There exists a rendezvous algorithm $B$ such that during the execution of $B$ the movement of the robots follow these rules:
\begin{itemize}
\item[(1)]
A robot does not change direction between meetings with other robots.
\item[(2)]
The robots always move at full speed. 
\end{itemize}
Moreover, the competitive ratio of $B$ is less or equal to $\alpha$. 
\end{lemma}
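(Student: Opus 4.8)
The plan is to hold the offline benchmark fixed and argue only that the transformation never delays the online rendezvous. Recall from the model that the optimal offline time equals $D/2$, where $D$ is the diameter of the non-faulty robots, and that this quantity depends only on the reported initial positions, not on the algorithm. Hence the competitive ratio is the supremum, over all choices of faulty set $F$ and all faulty behaviors, of (online rendezvous time)$/(D/2)$, and to get $\mathrm{cr}(B)\le\mathrm{cr}(A)=\alpha$ it suffices to build $B$ from $A$ so that, for every instance and every adversary, the first time all non-faulty robots are co-located and can certify it via a task attempt is no later under $B$ than under $A$. I would obtain $B$ by local surgery on the conditional trajectories prescribed by $A$, maintaining throughout the temporal invariant that \emph{no meeting or task attempt occurs later in $B$ than it does in $A$}.

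For rule~(2) I would delete every maximal sub-interval on which $A$ has a robot idle or move below unit speed while lying inside a single inter-meeting phase, replacing it by unit-speed motion and shifting the remainder of the prescribed conditional route earlier in time. Because a robot receives information only at meetings and their task attempts, its behavior inside such a phase is predetermined, so the reparametrization is well defined; and since it only advances the robot in time along the route it was going to follow anyway, the invariant is preserved. Consequently every meeting and every task attempt of $A$ is realized in $B$ at the same time or earlier, so the rendezvous-enabling co-location happens no later.

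For rule~(1) I would treat each inter-meeting phase separately. Within a phase the robot meets no one and gains no information, so $A$ fixes a route on the line whose endpoints are its position at the start of the phase and its position at the closing meeting; if this route reverses, I replace it by the direct monotone segment between those endpoints. On a line that segment is no longer than the original route and reaches the closing meeting point no later, and by full speed any freed slack is spent advancing the robot further along its subsequent route, again keeping the invariant. The one subtlety to record here is that meetings are symmetric, so an excursion on which $r$ meets no one deletes no meeting of the instance when it is straightened; and I would design $B$ to simply ignore any \emph{new} meeting that straightening might create (continuing the planned monotone full-speed motion), with the only possible effect being an earlier successful task attempt. Iterating over all phases of all robots yields trajectories that are monotone between meetings and traversed at full speed.

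The step I expect to be the main obstacle is turning these per-execution surgeries into a single well-defined online algorithm $B$ that obeys both rules in \emph{every} execution at once. All non-faulty robots run the same $B$, while the adversary varies which robots are faulty, how they move, and what they report; thus the straightenings and re-timings of a given robot are constrained simultaneously across the whole family of executions that share its non-faulty trajectory, and a modification that helps against one faulty set must not desynchronize an intended meeting against another. I would handle this by performing the re-timings in a coupled fashion and inducting on the global, time-ordered sequence of meeting events: process events in order, and at each genuine meeting re-synchronize the participating robots to the (earlier) time the meeting now occurs, so that the conditional branch taken there, and hence all downstream behavior, is inherited unchanged but shifted earlier. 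Verifying that this coupling preserves the no-delay invariant through every conditional branch, and that ``ignoring spurious meetings'' never destroys a meeting that is actually needed for rendezvous in some execution, is the technical heart of the argument.
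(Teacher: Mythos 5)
Your handling of rule (1) and of robots that merely slow down matches the paper's exchange argument: cut the excursion, or move at full speed, and the meeting that closes the phase can only occur earlier. The genuine gap is in your treatment of rule (2) for a robot that \emph{stops}, which is precisely the case the paper spends most of its proof on. Your surgery --- ``delete the idle interval and shift the remainder of the conditional route earlier in time'' --- is not well defined in the crucial situation where the robot idles because it is \emph{waiting for another robot to arrive}: the remainder of its route begins with that meeting, whose time is controlled by the other robot, so nothing can be unilaterally advanced. The robot must instead fill the waiting interval with full-speed motion that keeps every scheduled meeting intact, and your substitute for this --- a global re-timing plus ``re-synchronization at genuine meetings'' --- is exactly the cross-robot, cross-execution coupling that you yourself flag as the unresolved ``technical heart.'' Advancing one robot while its partners are advanced by different amounts displaces meetings in time \emph{and in space}, and those spatial displacements cascade through all later events and through every execution (every choice of faulty set and faulty behavior) that shares the same non-faulty trajectory; your sketch supplies no invariant strong enough to control this.

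The paper dissolves the problem with a local trick your proposal is missing. If $r$ stops at $x$, list the robots $b_1,\ldots,b_m$ it would meet there in increasing order of meeting time. Instead of waiting, $r$ moves toward the next robot in this sequence: since $b_i$ is approaching, they meet after $d_i/2$ (where $d_i$ is their distance when $b_i$ starts toward $r$), and $r$ can walk back to $x$ within total time $d_i$, which is at most the time $\delta_{i+1}$ until the next scheduled meeting. Thus every meeting of $A$ still occurs at the same time or earlier, $r$ is back at its post whenever anyone expects to find it there, and no other robot's trajectory is touched --- the surgery is confined to $r$, so no coupling across robots or executions ever arises. Replacing your global shift-and-resynchronize plan with this round-trip argument is what closes the gap.
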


We assume throughout the paper that the movement of the robots in any rendezvous algorithm follows rules (1) and (2) of Lemma \ref{lem:Opt-move-rules}.

\subsection{Upper bounds}
%There are several options for rendezvous algorithms depending on the location of the robots. The main idea in most of the algorithms presented in this paper is to execute an algorithm that takes advantage of the knowledge the robots have about their positions on the infinite line. 

The first rendezvous algorithm we present has a competitive ratio which is bounded above by the number of faulty robots plus one. It is interesting to note, that to obtain such competitive ratio no knowledge of the number of faulty robots is necessary. The idea of the algorithm can be summarized as follows. Consider the distances between consecutive robots on the line. The algorithm shrinks the  shortest interval (between consecutive robots) in that the two robots at its endpoints meet at its midpoint while the rest of the robots ``follow the shrinkage'' depending on their location until all non-faulty robots meet (or sufficiently many
of them in the case where the task does not require all non-faulty robots to be together to be performed). 
We prove the following theorem.

\begin{theorem}
\label{mainthm}
There is a rendezvous algorithm for $n>2$ robots at most $f$ of which are faulty whose competitive ratio is at most $f+1$, where $f \leq n-2$. 
\end{theorem}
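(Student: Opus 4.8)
The plan is to turn the informal ``shrink-the-shortest-interval'' idea into a concrete schedule and charge its running time against the non-faulty diameter $D$ (recall the offline optimum is $D/2$). First I would sort the reported positions and, in phases, repeatedly locate the shortest gap between two consecutive groups and close it: the two groups bounding this gap walk toward each other at full speed and meet at its midpoint, while every other group ``follows the shrinkage'' by translating the same distance (groups left of the gap shift right, groups to the right shift left), also at full speed. After each merge a group attempts the task, and the algorithm halts the instant the task succeeds, i.e. the instant all non-faulty robots share a common point. By Lemma~\ref{lem:Opt-move-rules} it suffices to analyse such a full-speed, no-backtracking schedule.

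The first key step is a gap-invariance lemma: closing one gap of length $g$ leaves the length of every other gap unchanged (a one-line computation, since all groups on a given side of the closing gap translate by the common amount $g/2$). Two consequences follow. Because only the closed gap vanishes while the remaining gap multiset is untouched, the total elapsed time equals $\tfrac12\sum g$ over the closed gaps, independently of the order. And two groups coalesce only when the gap between them is explicitly closed, so writing $[L,R]$ for the span of the non-faulty set (hence $R-L=D$), all non-faulty robots meet \emph{exactly} when every gap lying inside $[L,R]$ has been closed; these ``inner'' gaps partition $[L,R]$ and so sum to $D$.

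It remains to bound the time wasted on gaps that are not inner. Since phases process gaps in non-decreasing length, at the halting instant the last inner gap closed is the largest inner gap $g^{\ast}$, with $g^{\ast}\le D$, and every gap of length below $g^{\ast}$ has already been closed. Thus the closed gaps comprise all inner gaps (total length $D$) together with a set of ``outer'' gaps, each of length at most $g^{\ast}\le D$. Every robot strictly outside $[L,R]$ is faulty, so at most $f$ of them exist; writing $a$ and $b$ for the numbers to the left of $L$ and to the right of $R$, the outer gaps number exactly $a+b\le f$. Hence the wasted length is at most $f g^{\ast}\le fD$, the total schedule length is at most $\tfrac12(D+fD)=\tfrac{(f+1)D}{2}$, and dividing by the offline optimum $D/2$ yields a competitive ratio of at most $f+1$.

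The main obstacle I anticipate is the book-keeping in the last step rather than any single hard inequality: one must argue cleanly that at the halting time the closed gaps are exactly the inner gaps plus some outer gaps of length at most $g^{\ast}$ (which rests on the gap-invariance lemma and on processing gaps shortest-first), and then that there are at most $f$ outer gaps precisely because all non-faulty robots lie in $[L,R]$ by definition of $D$. A secondary point to verify is that ``following the shrinkage'' keeps every robot moving at unit speed and reversing direction only at merge events, so the schedule is of the restricted form sanctioned by Lemma~\ref{lem:Opt-move-rules}.
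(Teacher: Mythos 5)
Your proposal is correct and follows essentially the same route as the paper's proof: the same shrink-the-shortest-interval algorithm, with elapsed time equal to half the sum of the closed gaps, these being exactly the inner gaps (summing to the non-faulty diameter $D$) plus at most $f$ outer gaps, each bounded by the largest inner gap $g^{\ast}\le D$ thanks to shortest-first processing, which yields the ratio $f+1$. One minor caveat: your closing remark that followers reverse direction only at merge events is not actually true of this schedule (a follower can reverse when the currently shortest gap switches to its other side), but this is immaterial because the upper-bound analysis never needs Lemma~\ref{lem:Opt-move-rules}, which serves only to normalize algorithms when proving lower bounds.
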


We now describe a second general approach for rendezvous of $n$ robots, which also works for any number $f$ of faulty robots. Unlike the previous one, this algorithm  has a competitive ratio independent of 
$f$, (it equals 12). The core of our approach is the Algorithm \ref{alg:6D}, presented in \cite{collins2011synchronous}, which guarantees  rendezvous of any two robots, at initial integer positions at distance $d$ on the line, in time of at most $6d$. The idea of the algorithm is the following. Each robot gets an integer label corresponding to its initial position. The algorithm consists of a sequence of rounds, each round containing two stages. In the first round, odd-labelled robots move distance 1/2 to the right in the first stage and then distance 1 to the left in the second stage. The even-labelled robots move distance 1/2 to the left in the first stage and then distance 1 to the right in the second stage. Observe that each odd-label robot would meet its right neighbour at initial distance 1 in the first stage and its left neighbour at distance 1 in the second stage. At the end of the first round robots are in groups that from now on will travel together. 
%\section{Algorithm~\ref{alg:6D}}
\vspace{-0.3cm}
\begin{algorithm}[H]
\caption{Rendezvous on the infinite line}
\label{alg:6D}
\begin{algorithmic}[1]
\State{Set $\ell=\frac 12$.}
\ForAll{agents $a$}
 \State{Set $label(a)=$ position of $a$ on the line.}
\EndFor
\ForAll{$i=1, 2, 3, \ldots$}
  \ForAll{agent $a$}
\State{{\bf Stage 1.} }
\If{$odd(label(a))$}
\State{move right distance $\ell$}
\Else
\State{move left distance $\ell$.}
\EndIf
\State{{\bf Stage 2.}}
\If{$odd(label(a))$}
\State{move left distance $2\ell$.}
\Else
\State{move right distance $2\ell$.}
\EndIf
\State{$\ell=2\ell$}
\State{$label(a)=\lfloor\frac{label(a)}{2}\rfloor$}
\EndFor
\EndFor
\end{algorithmic}
\end{algorithm}
All groups are then at even distances. In round two, the configuration of such groups on the line is scaled up by the factor of two and each group of robots meet neighbouring groups at distance 2 in the two corresponding stages. The process continues inductively and after round $i$, the groups are at integer positions being multiples of $2^i$. It is possible to show that during round $i$,  in its first stage meet all robots initially placed in any interval $[(2k-1)2^i,(2k-1)2^i)$, for some integer $k$, and in its second stage meet all the robots initially placed in any interval $[(2k)2^i,(2k+2)2^i)$, for some integer $k$. Let $D$ be minimum diameter of the set of non-faulty robots required to rendezvous, and $i^*=\lceil \log_2 D \rceil$. It easy to see that all the non-faulty robots must meet in the first or the second stage of round $i^*+1$. Moreover, the total distance travelled by each robot is linear in $D$. 
In \cite{collins2011synchronous} they show the following:
\begin{theorem}[\cite{collins2011synchronous}]
\label{thm:6d}
For two agents $a_1, a_2$ starting at distance $d$ (and at integer points) on the line, Algorithm \ref{alg:6D} permits rendezvous within at most $6d$ time.
\end{theorem}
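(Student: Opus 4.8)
The plan is to prove by induction on the round number $i$ a structural invariant for the configuration at the start of each round, and then to read the meeting time off this invariant. Write the two agents' integer starting positions as $p<q$ with $q-p=d$, and for each $j$ let $B^{j}_k=[k2^{j},(k+1)2^{j})$ denote the $k$-th dyadic block of length $2^{j}$. The invariant $I(i)$ I would maintain is: at the start of round $i$ (taking ``round $0$'' to be the initial configuration) the agents are partitioned into groups, one per nonempty block $B^{i-1}_k$; all agents whose initial positions lie in a common block have already met and occupy a single point $x_k=k2^{i-1}+2^{i-2}-\tfrac12$; and the group inherits the label $k$, so that its Stage-1/Stage-2 motion in round $i$ is governed by the parity of $k$. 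In particular, consecutive nonempty groups are spaced exactly $2^{i-1}$ apart, since $x_{k+1}-x_k=2^{i-1}$.

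First I would verify the base case $i=1$: each singleton $\{p\}$ carries label $p$ and sits at $x_p=p\cdot2^{0}+2^{-1}-\tfrac12=p$, as required. For the inductive step I would execute one round with $\ell=2^{i-2}$ on a pair of adjacent groups and compute their four stage-endpoint positions. The decisive computation is that the even group $2m$ and the odd group $2m+1$, which begin round $i$ at distance $2^{i-1}=2\ell$, both arrive at $(2m+1)2^{i-1}-\tfrac12=m2^{i}+2^{i-1}-\tfrac12$ at the end of Stage 2, so they meet and coalesce into the block $B^{i}_m$; and the odd group $2m+1$ and the even group $2m+2$ both arrive at $(m+1)2^{i}-\tfrac12$ at the end of Stage 1, so these too meet. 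Thus in round $i$ every pair of adjacent groups meets---an odd group with its right neighbour in Stage 1, an even group with its right neighbour in Stage 2---while the relabelling $k\mapsto\lfloor k/2\rfloor$ identifies precisely the pairs $\{2m,2m+1\}$ that travel together afterwards. This re-establishes $I(i+1)$.

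To conclude, set $i^{*}=\lceil\log_2 d\rceil$ and look at the start of round $i^{*}+1$, where by $I(i^{*}+1)$ the groups are exactly the blocks of length $2^{i^{*}}$. Since $q-p=d\le 2^{i^{*}}$, the points $p$ and $q$ lie either in the same block (whence they met no later than round $i^{*}$) or in two adjacent blocks, and adjacent blocks meet during round $i^{*}+1$ whatever the parity of their labels. Hence the agents have met by the end of round $i^{*}+1$. Each round $i$ lasts $\ell+2\ell=3\cdot2^{i-2}$, so the elapsed time through round $i^{*}+1$ is $\sum_{i=1}^{i^{*}+1}3\cdot2^{i-2}=\tfrac32\bigl(2^{i^{*}+1}-1\bigr)=3\cdot2^{i^{*}}-\tfrac32$. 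Finally $i^{*}=\lceil\log_2 d\rceil$ forces $2^{i^{*}-1}<d$, i.e.\ $2^{i^{*}}<2d$, so this quantity is at most $6d-\tfrac32<6d$, which is the claimed bound.

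The main obstacle is the inductive step: one must carry the offset $2^{i-2}-\tfrac12$ and the doubling of $\ell$ correctly through the position arithmetic and verify \emph{both} the Stage-1 and the Stage-2 meetings. Both are genuinely needed, because the two target agents may straddle a block boundary whose left label is odd (meeting in Stage 1) or even (meeting in Stage 2), and the time bound must hold in either case. The remaining ingredients---the base case, the ``same block'' shortcut, and the geometric-sum estimate---are routine.
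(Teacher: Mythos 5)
Your proof is correct, and it takes essentially the same approach as the paper: the paper imports Theorem~\ref{thm:6d} from \cite{collins2011synchronous} without giving its own proof, but the discussion preceding the theorem sketches exactly your inductive dyadic-block argument (groups of initially close robots coalescing at scale $2^i$, meetings across odd-labelled block boundaries in Stage~1 and even-labelled ones in Stage~2, with rendezvous by round $\lceil\log_2 d\rceil+1$). Your write-up supplies the position arithmetic for the invariant and the geometric-sum bound $3\cdot 2^{i^*}-\tfrac32<6d$ that the paper leaves to the citation, and both computations check out.
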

The following lemma is an immediate consequence of Theorem \ref{thm:6d}.
\begin{lemma}
\label{lem:UB-6d}
Let $a_1$ and $a_2$ be two robots on the real line with integer starting positions at distance $d$. Then the rendezvous time of $a_1$ and $a_2$ in Algorithm \ref{alg:6D} is at most $6d$. 
%Moreover any other robot in interval $[a_1, a_2]$ meet at the time of rendezvous of $a_1$ and $a_2$. 
\end{lemma}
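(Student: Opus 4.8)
The plan is essentially immediate, because the statement to be proved is a verbatim restatement of Theorem~\ref{thm:6d} in the notation of the present paper. Both statements concern two entities placed at integer points of the line at mutual distance $d$ — called agents in Theorem~\ref{thm:6d} (following the source \cite{collins2011synchronous}) and robots in Lemma~\ref{lem:UB-6d} — and both assert that Algorithm~\ref{alg:6D} brings them together within time $6d$. Since the paper treats the words \emph{agent} and \emph{robot} as synonyms, and indeed the pseudocode of Algorithm~\ref{alg:6D} is itself phrased in terms of agents, the only task is to check that the hypotheses line up so that the theorem can be invoked directly.

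I would therefore simply instantiate Theorem~\ref{thm:6d} with the two robots $a_1$ and $a_2$ playing the role of the two agents. The integrality hypothesis on the starting positions is assumed in the lemma, and the quantity $d$ denotes the same initial separation in both statements, so the conclusion ``rendezvous within at most $6d$ time'' transfers verbatim to ``the rendezvous time of $a_1$ and $a_2$ in Algorithm~\ref{alg:6D} is at most $6d$.'' No new computation is required.

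The closest thing to a subtlety — and hence the only candidate for a ``hard part'' — is making sure that the bound $6d$ refers to the \emph{meeting time} and not merely to the total distance travelled by a robot. This is already settled by the informal round-by-round discussion preceding the lemma: the authors track, for each round $i$, precisely which dyadic interval of initial positions has coalesced by the end of each of its two stages, and bound the relevant round index by $i^{*}=\lceil \log_2 D \rceil$, so that two robots at distance $d$ meet no later than this round. Once that reading of Theorem~\ref{thm:6d} is fixed, the lemma follows with no further argument, which is exactly why it is labelled an immediate consequence.
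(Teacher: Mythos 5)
Your proposal is correct and coincides with the paper's own treatment: the paper offers no separate proof, declaring Lemma~\ref{lem:UB-6d} an immediate consequence of Theorem~\ref{thm:6d}, which is exactly the direct instantiation you describe. The only remark worth adding is that the lemma is later invoked for pairs of robots inside an $n$-robot execution (in the proof of Theorem~\ref{thm:UB-12}), and the transfer from the two-agent theorem is legitimate because Algorithm~\ref{alg:6D} is oblivious---each robot's trajectory depends only on its own initial position/label---so any two robots move exactly as they would if they were the only agents present.
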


We now have all the required results to prove an upper bound of 12 on the competitive ratio of rendezvous of $n$ robots $f$ of which are faulty. Our approach is to approximate the initial positions of all robots by other ones which are at rational coordinates. Then the obtained configuration may be scaled up so that all initial robot positions are integers and Algorithm~\ref{alg:6D} may be applied. We show that for any $\epsilon > 0$ we can choose an approximation fine enough so that the competitive ratio does not exceed $12+ \epsilon$.  We have the following theorem.
\begin{theorem}
\label{thm:UB-12}
There exists a rendezvous algorithm for $n>2$ robots, at most $f \leq n-2$ of which are faulty, which guarantees a competitive ratio less than $12+\epsilon$, for any $\epsilon>0$.
\end{theorem}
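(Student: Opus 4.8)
The plan is to reduce the real-valued instance to the integer instance solved by Algorithm~\ref{alg:6D}, paying only an arbitrarily small additive penalty in the competitive ratio. First I would fix a grid resolution $\delta>0$ and round each reported position $x_j$ to the nearest multiple of $\delta$, writing $\tilde x_j=\delta m_j$ with $m_j\in\mathbb{Z}$; this perturbs every coordinate by at most $\delta/2$. Rescaling the line by the factor $1/\delta$ turns the rounded positions into the integer labels $m_j$, so Algorithm~\ref{alg:6D} applies verbatim. Each robot is then instructed first to walk from $x_j$ to $\tilde x_j$ (an initial detour of length at most $\delta/2$) and afterwards to execute the $\delta$-scaled version of Algorithm~\ref{alg:6D}. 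Since unit speed means that time equals distance, every movement time in the scaled execution is exactly $\delta$ times the corresponding integer time of Theorem~\ref{thm:6d}.

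Next I would control the error introduced by the rounding. Let $D$ denote the (unknown) diameter of the non-faulty set, so that the offline optimum equals $D/2$. Because rounding displaces each endpoint of the non-faulty span by at most $\delta/2$, all rounded non-faulty positions lie in an interval of length at most $D+\delta$, i.e. the integer separation of the two extreme non-faulty labels is at most $(D+\delta)/\delta$. The one quantity the central authority cannot observe is $D$ itself, but it \emph{can} compute $d_{\min}$, the smallest gap between two distinct reported positions; since $f\le n-2$ guarantees at least two non-faulty robots, their extreme positions are distinct and therefore $D\ge d_{\min}>0$. I would accordingly choose $\delta$ as a function of $d_{\min}$ and $\epsilon$, small enough that $13\delta/d_{\min}<\epsilon$, which is legitimate precisely because $d_{\min}$ is known to the central authority even though $D$ is not.

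To finish I would bound the rendezvous time and assemble the ratio. Applying Lemma~\ref{lem:UB-6d} to the two extreme non-faulty robots, whose integer separation is at most $(D+\delta)/\delta$, they meet in the scaled execution within time $6(D+\delta)$, so the total time is at most $6(D+\delta)+\delta/2$. The crucial structural point is that in Algorithm~\ref{alg:6D} every group is a \emph{contiguous interval} of the line, since groups are formed by repeatedly merging spatially adjacent groups; hence at the instant the two extreme non-faulty robots enter a common group, every robot strictly between them, in particular every non-faulty robot, is already in that same group and thus at the same point. Therefore all non-faulty robots have rendezvoused within $6(D+\delta)+\delta/2$, and dividing by the offline time $D/2$ gives a competitive ratio of $12+13\delta/D\le 12+13\delta/d_{\min}<12+\epsilon$, as required.

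I expect the main obstacle to be the two issues hidden by the naive ``round, scale, and run'' slogan. First, $\delta$ must be fixed \emph{before} the faulty set is revealed and hence without knowledge of $D$; this is what forces the replacement of $D$ by the observable lower bound $d_{\min}$. Second, the two-robot guarantee of Theorem~\ref{thm:6d} must be upgraded to certify rendezvous of \emph{all} non-faulty robots, and this is exactly where the interval structure of the groups does the work: because each robot's trajectory is an independent deterministic function of its label, it is unaffected by the presence or misbehaviour of the faulty robots, so the extreme non-faulty pair meets on schedule and drags the whole intervening population into a single point.
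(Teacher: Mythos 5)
Your proposal is correct and follows essentially the same approach as the paper's own proof: discretize the reported positions to make them integers (the paper perturbs to rationals and scales by an lcm of denominators, you round to a uniform $\delta$-grid and scale by $1/\delta$ --- the same idea), run Algorithm~\ref{alg:6D} on the resulting integer configuration, invoke Lemma~\ref{lem:UB-6d} for the two extreme non-faulty robots together with the interval structure of the groups to conclude that all intermediate non-faulty robots meet at the same moment, and absorb the discretization error into $\epsilon$ by calibrating the perturbation against the observable minimum inter-robot gap, which lower-bounds the unknown diameter $D$. If anything, you are more explicit than the paper on two points it glosses over: the initial detour each robot makes to its rounded position, and the contiguity argument showing that the meeting of the extreme non-faulty pair certifies rendezvous of every non-faulty robot between them.
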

As a corollary of Theorems \ref{mainthm} and \ref{thm:UB-12} we can state the following.
\begin{corollary}
\label{cor-min}
There is a rendezvous algorithm for $n>2$ robots at most $f \leq n-2$ of which are faulty, with competitive ratio at most $\min\{12 + \epsilon, f+1\}$, for any $\epsilon >0$.
\end{corollary}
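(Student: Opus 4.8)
The plan is to fuse the two algorithms of Theorems~\ref{mainthm} and~\ref{thm:UB-12} into a single strategy that, instance by instance, realizes whichever of the two guarantees is sharper. Write $A_1$ for the interval-shrinking algorithm of Theorem~\ref{mainthm}, which brings all non-faulty robots together within time $(f+1)\,D/2$, and $A_2$ for the scaling algorithm of Theorem~\ref{thm:UB-12}, which does so within time $(12+\epsilon)\,D/2$, where $D$ is the diameter of the non-faulty set. Since (for $\epsilon<1$) $\min\{12+\epsilon,\,f+1\}$ equals $f+1$ when $f$ is small (say $f\le 11$) and equals $12+\epsilon$ once $f$ is large ($f\ge 12$), it would suffice to run $A_1$ in the small-fault regime and $A_2$ in the large-fault regime. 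Crucially, neither algorithm needs to know $f$, so the entire content of the corollary is in deciding, online, which regime the instance belongs to.

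The key first step is to observe that a genuine merge is unavoidable: $A_2$ alone gives only $12+\epsilon$ and is off by up to a factor $\approx 12/(f+1)$ when $f$ is small, while $A_1$ alone is arbitrarily bad when $f$ is large, so neither standalone algorithm attains the minimum. The natural candidate is a deadline-and-fallback rule: follow $A_1$, and if the non-faulty robots have still not rendezvoused by the moment $A_1$ would require were $f$ equal to $11$ (namely $(11+1)\,D/2=6D$), abandon $A_1$ and finish with $A_2$. In the non-fallback branch $A_1$ meets the deadline and the ratio is $f+1$; in the fallback branch the missed deadline certifies $f\ge 12$, so the target ratio is $12+\epsilon$.

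The step I expect to be the main obstacle is making this fallback preserve the \emph{exact} minimum rather than the sum of the two costs. Executed naively, restarting $A_2$ only after $A_1$ has already burned $6D$ yields a ratio near $24$, not $12$; and rapidly time-interleaving the two schedules at reduced effective speed loses a multiplicative factor in both branches. Compounding this, the central authority knows neither $f$ nor the non-faulty diameter $D$, so the deadline $6D$ cannot even be set directly. I would therefore replace $D$ by the known span $\Delta$ of the reported positions (an a priori upper bound $D\le\Delta$) together with a geometric refinement of the guess, and devote the bulk of the argument to showing that the overhead of incorrect guesses and of the overlap is a vanishing geometric fraction which the additive slack $\epsilon$ — already present to absorb the rational-approximation cost inside $A_2$ — can also absorb, so that the final competitive ratio is at most $\min\{12+\epsilon,\,f+1\}$ without inflating the clean $f+1$ bound in the small-fault regime.
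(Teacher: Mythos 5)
There is a genuine gap, and it stems from a misreading of the statement. The corollary does not ask for an algorithm that is oblivious to $f$: here $f$ is the \emph{known} upper bound on the number of faulty robots (the paper makes this explicit at the start of Section~\ref{sec:bounded-CR}, where it notes that in employing Corollary~\ref{cor-min} ``we need to have knowledge of an upper bound on $f$''). With that reading, the proof is a one-line case split: the central authority compares $f+1$ with $12+\epsilon$ and runs the SSI algorithm of Theorem~\ref{mainthm} when $f+1$ is the smaller value (i.e., $f\leq 11$), and the scaling algorithm of Theorem~\ref{thm:UB-12} otherwise; each branch inherits its guarantee directly from the corresponding theorem. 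Your entire construction addresses a different, much harder problem --- achieving the minimum online with no knowledge of $f$ at all --- which the corollary does not claim.

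Moreover, your proposed resolution of that harder problem does not close, and the obstacle you yourself flag is fatal rather than technical. Whatever deadline $cD$ you set (even granting that $D$ were known), in the fallback branch the time already spent on $A_1$ is proportional to the optimal cost $D/2$, not a vanishing fraction of it, so the fallback ratio is at least $12+2c$: a constant additive loss, which the slack $\epsilon$ can absorb only if $c\leq \epsilon/2$. But with $c\leq \epsilon/2$ the deadline $cD$ is missed by $A_1$ even when $f=1$, since its guarantee is only $(f+1)D/2\geq D$; then the non-fallback branch never occurs for any $f\geq 1$ and the $f+1$ bound for small $f$ is lost entirely. Geometric refinement of the guess for $D$ does not escape this: the final (correct) guess alone already incurs the proportional overhead. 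So the scheme, as described, yields a ratio of roughly $\min\{f+1,2c\}$ in one regime and $12+2c$ in the other, and no choice of $c$ makes both equal to the claimed $\min\{12+\epsilon,\,f+1\}$.
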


\subsection{Lower bound}

Next we show that any rendezvous algorithm for $n$ robots, which include at least one which is faulty, must have a competitive ratio of at least 2.
\begin{theorem}
\label{mainthm1}
For any $n>2$ robots, any $1 \leq f \leq n-2$ of which are faulty, the competitive ratio of any algorithm that 
achieves rendezvous of at least $n-f$ non-faulty robots is at least $2$.
%\footnote{{\bf N.B.} We note that if it is required that only two non-faulty among the $n-f$ non-faulty robots meet then this task can be solved optimally.}
\end{theorem}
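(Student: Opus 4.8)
The plan is to exhibit, for any deterministic algorithm, a single placement of the $n$ robots together with two choices of the faulty set that the algorithm cannot distinguish, and to argue that on at least one of them the rendezvous time is at least twice the offline optimum. Since the algorithm is a fixed function of the reported positions, it suffices to force two faulty sets of size at most $f$ and show one of them yields ratio $\ge 2$.

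Concretely, I would place one robot $A$ at position $0$, one robot $C$ at position $2d$, and the remaining $n-2$ robots (among them a distinguished robot $B$) at position $d$, with every robot reporting its true coordinate. I then consider two scenarios, each having exactly $f$ faulty robots. In scenario $S_1$ the faulty set consists of $A$ together with $f-1$ of the middle robots other than $B$; the non-faulty set is then $B$, $C$ and the surviving middle robots, all lying in $[d,2d]$ with diameter $d$. In scenario $S_2$ the faulty set is $C$ together with $f-1$ middle robots other than $B$; the non-faulty set is $A$, $B$ and the surviving middle robots, all lying in $[0,d]$, again with diameter $d$. The constraints $1 \le f \le n-2$ are exactly what is needed for both faulty sets to be realizable while keeping $B$ non-faulty in both (one needs $f-1 \le n-3$). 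In each scenario the diameter of the non-faulty set is $d$, so the offline optimum is $d/2$.

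Since the reported positions are identical in $S_1$ and $S_2$, the algorithm assigns the same trajectories in both, and in particular $B$ follows one fixed $1$-Lipschitz trajectory $x_B(\cdot)$ with $x_B(0)=d$. Let $T_1,T_2$ be the rendezvous times in $S_1,S_2$. At rendezvous all non-faulty robots occupy a common point; using that $A$ starts at $0$ and $C$ at $2d$ and that no robot exceeds unit speed, the common point in $S_2$ (which equals $x_B(T_2)=x_A(T_2)$) satisfies $x_B(T_2)\le T_2$, while the common point in $S_1$ (which equals $x_B(T_1)=x_C(T_1)$) satisfies $x_B(T_1)\ge 2d-T_1$. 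Applying the Lipschitz bound $|x_B(T_1)-x_B(T_2)|\le|T_1-T_2|$ to these two inequalities gives $2d-T_1-T_2\le|T_1-T_2|$, i.e. $\max(T_1,T_2)\ge d$. Hence on one of the two scenarios the rendezvous time is at least $d=2\cdot(d/2)$, and the competitive ratio is at least $2$.

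The main obstacle is not the final inequality but setting up the reduction so that the two scenarios are genuinely forced on the algorithm. I must argue that rendezvous can be certified only when all $n-f$ non-faulty robots (which, since exactly $f$ are faulty, are all of the non-faulty ones) physically coincide, so that the faulty robots can neither create a premature successful task attempt nor otherwise accelerate the meeting; and I must verify that both faulty sets, sharing the common non-faulty robot $B$, remain feasible throughout the whole range $1\le f\le n-2$, with the extreme case $f=n-2$ leaving only $B$ and one endpoint non-faulty. Once the indistinguishability and the exact-coincidence characterization of rendezvous are in place, the unit-speed bound pins the meeting locations of the extreme robots $A$ and $C$, and the one-dimensional Lipschitz computation closes the argument.
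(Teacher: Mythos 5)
Your proof is correct, and it reaches the bound by a genuinely different technical route than the paper, even though the hard instance has the same three-cluster flavor. The paper places $\lceil\frac{f+1}{2}\rceil$ robots at $-1$, $n-f-1$ at the origin and $\lfloor\frac{f+1}{2}\rfloor$ at $+1$, invokes Lemma~\ref{lem:Opt-move-rules} to restrict attention to algorithms whose robots move at full speed without turning before a meeting, observes that at time $1/2$ some origin robot must then be exactly at $\pm 1/2$, and lets the adversary declare non-faulty one robot at the far endpoint, forcing another $1/2$ of travel. You instead put single robots $A$, $C$ at the endpoints, hide the remaining $f-1$ faults inside the middle cluster (feasible precisely because $f\le n-2$), and run a clean two-scenario indistinguishability argument: since both scenarios present identical reported positions, the common non-faulty robot $B$ follows one fixed $1$-Lipschitz trajectory, and the chain $2d-T_1 \le x_B(T_1) \le x_B(T_2)+|T_1-T_2| \le T_2+|T_1-T_2|$ forces $\max(T_1,T_2)\ge d$ against an offline optimum of $d/2$. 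What your version buys is independence from Lemma~\ref{lem:Opt-move-rules}: the Lipschitz estimate applies to arbitrary speed profiles and turning behavior, so the lower bound is self-contained and does not require first normalizing the algorithm; what the paper's version buys is brevity, since Lemma~\ref{lem:Opt-move-rules} is proved anyway and, once granted, pins a middle robot's position at time $1/2$ with no case analysis over pairs of executions. Finally, the issue you flag as the main obstacle --- that faulty robots cannot trigger a premature successful rendezvous or otherwise perturb the execution --- is settled by the model rather than needing proof: the task succeeds only when all $n-f$ non-faulty robots are present, trajectories are fixed functions of the reported positions with no further communication, and for the lower bound the adversary may simply let the faulty robots follow their assigned trajectories, which makes your two scenarios genuinely indistinguishable.
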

\begin{proof} (Theorem~\ref{mainthm1})
Consider the following arrangement of the robots where $n-f$ robots
are required to perform rendezvous: $\lceil \frac{f+1}{2} \rceil$ are located at position
$-1$, $n-f-1$ are located at the origin and $\lfloor \frac{f+1}{2} \rfloor$ are located at position $1$.
By Lemma \ref{lem:Opt-move-rules}, we can assume there is an optimal rendezvous algorithm
in which all robots move at speed 1 for the first 1/2 time unit. At that time, at least one of the 
robots, say $r$, starting at the origin must be at  $-1/2$ or $1/2$. Wlog, assume it is at $-1/2$. Make all of 
robots starting at the origin non-faulty, one of the robots, say $r^{\prime}$,  starting at $1$ non-faulty, and the remaining
$f$ robots faulty. In order for $n-f$ non-faulty robots to meet, $r$ and $r^{\prime}$ must meet which
requires at least another 1/2 time unit, i.e., the competitive ratio of the algorithm is at least 2. 
\qed
\end{proof}

\section{Bounded Number of Faults}
\label{sec:bounded-CR}
In the previous section we proposed algorithms, whose competitive ratio did not depend on the knowledge of the number $f$ of faulty robots. However, employing Corollary~\ref{cor-min} to get the competitive ratio which is the best between the values 12 and $f+1$ (cf. Theorems~\ref{thm:UB-12}~and~\ref{mainthm}), we need to have knowledge of an upper bound on $f$.   In this section we show, that having more precise knowledge on an upper bound on $f$ allows us to obtain algorithms with more attractive competitive ratios. More exactly, we provide upper bounds for the competitive ratio of rendezvous algorithms where the number of faulty robots is known to be bounded by a fraction of the total number of robots. 
%We will observe that in some cases the upper bounds are equal the lower bound of 2 obtained in Section \ref{general:sec}. This completely solves the problem of optimal rendezvous algorithm for those cases.  

The following theorem shows that if the majority of the robots are non-faulty then there is a rendezvous algorithm whose competitive ratio is at most 2. By Theorem \ref{mainthm1}, this is optimal.

\begin{theorem}
\label{thm:main-MTC}
Suppose that $n\geq 3$ and the number of faulty robots is $f\leq \frac{n-1}{2}$. Then there is a rendezvous algorithm with competitive ratio at most 2.
\end{theorem}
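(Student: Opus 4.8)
The plan is to route every robot to a single, adversary-robust gathering point and argue that each non-faulty robot reaches it within time $D$, where $D$ is the diameter of the non-faulty set. Let $L$ and $R$ denote the positions of the leftmost and rightmost non-faulty robots, so that every non-faulty robot lies in $[L,R]$ and $D = R - L$; recall the offline optimum equals $D/2$, so it suffices to exhibit an algorithm that gathers all non-faulty robots by time $D$.

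First I would establish the key structural fact: the \emph{median} $m$ of all $n$ reported positions necessarily lies in $[L,R]$, no matter what the faulty robots report. Since $f \le (n-1)/2$, the number of non-faulty robots is $n - f \ge (n+1)/2 > n/2$, and all of these report truthful positions inside $[L,R]$. Hence strictly more than half of the reported values lie in $[L,R]$; consequently strictly fewer than half lie strictly below $L$ and strictly fewer than half lie strictly above $R$, which forces any median to satisfy $L \le m \le R$. This is exactly why the median (rather than, say, the midpoint of the reported extremes, which the faulty robots could push arbitrarily far away) is the right target.

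The algorithm then instructs every robot to move straight toward $m$ at full speed. For a non-faulty robot starting at $x \in [L,R]$ we have $|x - m| \le \max\{m - L,\, R - m\} \le R - L = D$, so it reaches $m$ no later than time $D$. Thus by some time $T \le D$ all non-faulty robots are simultaneously located at $m$; at that instant the group at $m$ contains all $n-f$ non-faulty robots, so a task attempt there succeeds and the algorithm terminates. Since no task attempt can succeed before all non-faulty robots are co-located, the rendezvous time is at most $D$, giving competitive ratio at most $D/(D/2) = 2$, which is optimal by Theorem~\ref{mainthm1}.

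The one point that needs care is that robots reaching $m$ early would have to wait there, whereas we maintain the convention (rules (1) and (2) of Lemma~\ref{lem:Opt-move-rules}) that robots never turn except at meetings and always move at full speed. I expect this to be the main obstacle, and the clean way around it is to first describe the naive ``go straight to $m$ and halt'' schedule, observe that its competitive ratio is at most $2$, and then invoke Lemma~\ref{lem:Opt-move-rules} to replace it by an equivalent full-speed, no-spurious-turn algorithm whose competitive ratio is no larger; since that transformation is independent of which robots turn out to be faulty, correctness is preserved for every fault pattern. Throughout, one should also check the degenerate case $D = 0$ (all non-faulty robots initially coincident), where rendezvous is immediate.
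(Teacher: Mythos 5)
Your proof is correct, and it takes a genuinely different route from the paper's. The paper proves Theorem~\ref{thm:main-MTC} with the adaptive MTC (Move Towards the Center) algorithm: the robots are split into a left and a right half that march toward each other; the first meeting (or the middle robot when $n=2f+1$) seeds a group that then greedily visits the remaining robots in order of increasing distance, attempting the task after each merge, and the bound of $2$ comes from a case analysis ($n>2f+1$ versus $n=2f+1$, with subcases on whether the middle robot is faulty). You instead use the majority hypothesis $f\leq\frac{n-1}{2}$ once, statically: since more than half of the reported positions are truthful and lie in $[L,R]$, the median $m$ of the reports must lie in $[L,R]$ --- the classic Byzantine-robust estimator argument (and you rightly observe that the midpoint of the reported extremes would fail here) --- after which every non-faulty robot reaches $m$ within time $D=R-L$, i.e., within twice the offline optimum $D/2$. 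Your algorithm is oblivious (trajectories fixed at the start, no reliance on intermediate task attempts) and the analysis is a few lines, whereas the paper's algorithm is adaptive, conforms natively to rules (1) and (2) of Lemma~\ref{lem:Opt-move-rules}, and can finish well before time $D$ on benign instances because it stops as soon as $n-f$ non-faulty robots have met. Your one nonconformity --- robots waiting at $m$ --- is handled correctly: proving the ratio for the naive halt-at-$m$ schedule and then invoking Lemma~\ref{lem:Opt-move-rules} to obtain a compliant algorithm with no larger competitive ratio is exactly what that lemma licenses (and is arguably unnecessary for an upper bound, since the rules are only a normalization of the algorithm space). Both arguments yield the ratio $2$, which is optimal by Theorem~\ref{mainthm1}.
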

As a consequence of this result and Theorem \ref{mainthm} we get the following corollary:
\begin{corollary}
\label{cor:optimal}
If the number of faulty robots is strictly less than the number of non-faulty robots then the competitive ratio for solving the rendezvous problem is exactly 2.
\end{corollary}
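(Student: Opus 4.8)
The statement asserts that the optimal competitive ratio equals $2$ whenever the faulty robots are in a strict minority, so my plan is simply to sandwich it between an upper bound of $2$ and a lower bound of $2$, each of which is already available in the excerpt. The whole argument is a matter of matching hypotheses, so the work lies in translating the phrase ``strictly less than'' into the precise integer inequalities that the two theorems consume.

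First I would rewrite the hypothesis. The condition that the number of faulty robots is strictly smaller than the number of non-faulty ones reads $f < n - f$, i.e. $2f < n$. Since $f$ and $n$ are integers, this is equivalent to $2f \leq n-1$, that is $f \leq \frac{n-1}{2}$. I would also record the two side conditions that the invoked theorems need: at least one fault, $f \geq 1$ (the degenerate case $f=0$ is trivial with ratio $1$ and is therefore excluded by the claim ``exactly $2$''), and $n \geq 3$, which follows from $n > 2f \geq 2$.

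For the upper bound I would apply Theorem \ref{thm:main-MTC} verbatim: its hypotheses are precisely $n \geq 3$ and $f \leq \frac{n-1}{2}$, both just verified, so it supplies a rendezvous algorithm of competitive ratio at most $2$. For the lower bound I would apply Theorem \ref{mainthm1}, whose hypothesis is $1 \leq f \leq n-2$. Here I need only check $f \leq n-2$, which follows from $f \leq \frac{n-1}{2}$ together with $n \geq 3$ (indeed $\frac{n-1}{2} \leq n-2$ iff $n \geq 3$); combined with $f \geq 1$ this places us squarely in the regime of Theorem \ref{mainthm1}, giving that every rendezvous algorithm has competitive ratio at least $2$.

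Combining the two bounds shows the optimal competitive ratio is exactly $2$, which completes the proof. There is no genuine obstacle beyond bookkeeping: the only points requiring care are the conversion of the strict inequality $f < n-f$ into the closed form $f \leq \frac{n-1}{2}$ used by Theorem \ref{thm:main-MTC}, and the verification that this same bound keeps $f$ inside the window $1 \leq f \leq n-2$ demanded by the lower-bound Theorem \ref{mainthm1}. Both reductions hinge on $n \geq 3$, which the hypothesis guarantees.
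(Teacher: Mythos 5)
Your proof is correct and takes essentially the same route as the paper: the corollary is obtained immediately by combining the upper bound of Theorem~\ref{thm:main-MTC} with the lower bound of Theorem~\ref{mainthm1} (the paper's citation of Theorem~\ref{mainthm} at this point is evidently a typo for Theorem~\ref{mainthm1}, since only the latter supplies the matching lower bound). Your hypothesis bookkeeping --- rewriting $f < n-f$ as $f \leq \frac{n-1}{2}$, and checking $n \geq 3$, $f \geq 1$, and $f \leq n-2$ --- is exactly the routine verification the paper leaves implicit.
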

%
%This implies that when the number of faulty robots is one we have an optimal online rendezvous algorithm.
%

%Theorem \ref{thm:main-MTC} shows that when $f\leq\frac{n-1}{2}$ then the rendezvous algorithm \ref{alg:mtc} has a competitive ratio of at most 2. Moreover, by Theorem \ref{mainthm1}, we know that 2 is a lower bound for any rendezvous algorithm. Consequently Algorithm \ref{alg:mtc} is an optimal rendezvous algorithm for the case $f\leq\frac{n-1}{2}$. 

In the sequel we consider the case $\frac{n-1}{2}<f<\frac 23(n-1)$
and provide an algorithm that has a better guarantee  than the general algorithm as long as our 
upper bound on
$f$ is greater than 4.

\begin{theorem}
\label{thm:UB-2n/3}
Suppose that $n\geq 3$ and there are at most $f$ faulty robots. If $f\leq \frac 23 (n-1)$ then there is a rendezvous algorithm with competitive ratio at most $5$. 
\end{theorem}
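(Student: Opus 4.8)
My plan is to first reduce the theorem to a purely geometric statement about the trajectories. By Lemma~\ref{lem:Opt-move-rules} the robots move at full speed and change direction only when they meet, and by the model the trajectories are fixed in advance from the reported configuration $P=\{p_1,\dots,p_n\}$, the only run-time event being the success of the task. Since a faulty robot can neither impersonate a non-faulty one at a meeting point nor stop a non-faulty robot from following its route, the \emph{movements} of the faulty robots are irrelevant to whether the non-faulty ones coincide; the faulty robots influence the execution only through the positions they report. Hence it suffices to design, for every configuration $P$, trajectories such that for \emph{every} subset $H\subseteq P$ with $|H|=n-f$ the robots of $H$ are simultaneously at a common point at some time at most $\tfrac52\,\mathrm{diam}(H)$ (recall the off-line optimum for $H$ is $\mathrm{diam}(H)/2$, so this yields competitive ratio $5$). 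As a sanity check, this is exactly how the ratio-$2$ algorithm of Theorem~\ref{thm:main-MTC} should be read: sending every robot to the global median $m$ of $P$ gathers each \emph{majority} subset $H$ at $m$ in time $\mathrm{diam}(H)$, because $m\in\mathrm{hull}(H)$ whenever $|H|>n/2$.

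Next I would isolate the combinatorial content of the hypothesis $f\le\tfrac23(n-1)$, namely $n-f>n/3$ (equivalently $f<2(n-f)$): the non-faulty robots number more than a third of the total. I would fix a constant number of quantile anchors of $P$ chosen so that the gaps between consecutive anchors each contain fewer than $n-f$ points; then no set $H$ of size $n-f$ can lie inside a single gap, so $\mathrm{hull}(H)$ contains at least one anchor. Moreover, if $\mathrm{hull}(H)$ lies entirely on one side of the global median, then $H$ is a strict \emph{majority} of the points on that side (since $|H|>n/3>\tfrac12\cdot\tfrac n2$), so the ratio-$2$ ``move to the local median'' procedure of Theorem~\ref{thm:main-MTC}, applied to that side, gathers $H$ at a point of $\mathrm{hull}(H)$ within $\mathrm{diam}(H)$. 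The plan is therefore to run a constant number of coordinated gathering rounds anchored at these quantiles: a round in which each half gathers internally around its local median, followed by a round in which the two resulting half-clusters move toward each other and merge. Side-contained non-faulty sets are finished in the first round at ratio $2$, straddling ones are caught when the half-clusters meet, and a careful accounting of one ``internal'' round plus one ``merge'' round is what should sum, in the $\mathrm{diam}(H)/2$ normalisation, to the claimed $5$.

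The main obstacle is controlling the distance travelled by the non-faulty robots \emph{uniformly} over all subsets $H$, because here $f$ may exceed $n/2$. This breaks the naive median approach in two ways that the construction must defeat: faulty robots may report arbitrarily distant positions, and they may numerically dominate any fixed region we select (a half, a quantile block), so a local median can be dragged far from $\mathrm{hull}(H)$ and a round that simply ``moves everyone to an anchor'' could force a non-faulty robot to travel $\gg\mathrm{diam}(H)$ — this is exactly the danger for the left part of a \emph{straddling} $H$, whose left half need not be a local majority. The crux is thus to shape the rounds so that each non-faulty robot only ever moves within $O(\mathrm{diam}(H))$ of its own hull, exploiting that $H$ is a local majority precisely on the side that contains it, and then to verify the time bound in both the straddling and side-contained cases while squeezing the constant down to exactly $5$. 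I expect the delicate part to be the bookkeeping that balances how far a robot can be carried during the ``wrong-guess'' round against the gain from $H$ being a local majority, and I anticipate this is where the precise threshold $\tfrac23(n-1)$ — rather than a weaker fraction — is consumed.
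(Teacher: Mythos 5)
Your write-up is a plan rather than a proof, and the step you yourself label ``the crux'' --- shaping the rounds so that each non-faulty robot only ever moves within $O(\mathrm{diam}(H))$ of its own hull --- is precisely what is missing. Worse, the concrete scheme you do specify (each half of the configuration runs the ratio-$2$ gathering of Theorem~\ref{thm:main-MTC} around its local median, then the two half-clusters merge) provably fails against the adversary you describe. Take $n=9$, $f=5$: non-faulty robots at $0,\epsilon,1,1+\epsilon$ and faulty reports at $-M,\,-M+1,\,-M+2,\,\tfrac12,\,\tfrac12$ for arbitrarily large $M$. The global median is $\epsilon$, so $H$ straddles it, and the left half consists of $-M,-M+1,-M+2,0$, in which the faulty robots form a $3/4$ majority. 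The left half's internal round drags the non-faulty robot at $0$ a distance $\Theta(M)$ leftward before any merge can begin, so $H$ is gathered only after time $\Theta(M)$ while $\mathrm{diam}(H)=1+\epsilon$: the competitive ratio is unbounded. No constant number of rounds of this shape, and no ``bookkeeping,'' can repair a loss that is unbounded relative to $\mathrm{diam}(H)$; what is needed is a structural reason why the optimum itself is large whenever dragging occurs. (A minor additional slip: your reduction should quantify over all $H$ with $|H|\geq n-f$, not $|H|=n-f$.)

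The paper supplies exactly that missing structural ingredient, and it is different in kind from your anchors. Lemma~\ref{lem:UB-even-3-groups} partitions the robots into three groups $G_L$, $G_M$, $G_R$ of sizes $\lfloor\frac n2\rfloor-k-1$, $2k+2$, $\lceil\frac n2\rceil-k-1$ with $k=\lfloor\frac n6-\frac 23\rfloor$, chosen by a counting argument so that the at most $\frac 23(n-1)$ faulty robots cannot cover two of the three groups; hence at least two groups contain non-faulty robots, and the offline diameter $D$ is bounded below by inter-group spans. This is what makes dragging affordable: the algorithm's completion time is charged not against $\mathrm{diam}(H)$ of an arbitrary straddling set but against distances ($d_1+x$, $d_4+\delta$, etc.) that the lemma certifies are themselves lower bounds on $D$. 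The algorithm is also not median-gathering: the outer groups sweep inward, the two extreme robots $A_l,A_r$ of the middle group act as collectors visiting the remaining robots in order of increasing distance, with robots sticking to a collector upon meeting, and the case analysis (rendezvous inside one side's cluster, at the two clusters' meeting, or after it) yields ratios $3$, $5$, $3$; the range $n\leq 8$ is dispatched separately via Theorem~\ref{mainthm} since there $f\leq 4$. Without an analogue of this three-group counting lemma, your plan has no lower bound on $D$ to charge the ``wrong-guess'' travel against, and that is exactly where it breaks.
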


In the sequel, we present the proof of Theorem \ref{thm:UB-2n/3}. 
First note that if $n\leq 8$. Then $f\leq \frac{2}{3} (8-1)=\frac{14}{3}$, and so $f\leq 4$. Therefore by Theorem \ref{mainthm}, there is a rendezvous algorithm with competitive ratio $5$. Thus, without loss of generality we can assume that $n\geq 9$. 

\begin{lemma}
\label{lem:UB-even-3-groups}
Let $n\geq 9$ and $f< \frac 23 (n-1)$ then there is a partition of the robots into three groups $G_L$, $G_M$, and $G_R$ such that at least two of the groups $G_L$, $G_M$, and $G_R$ contain a non-faulty robot.
\end{lemma}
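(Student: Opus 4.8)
The plan is to exhibit one explicit partition and reduce the claim to a single counting inequality. First I would order the $n$ robots by their positions on the line, $p_1 \le p_2 \le \cdots \le p_n$, and cut this sorted sequence into three contiguous blocks $G_L, G_M, G_R$ whose sizes are as equal as possible, so that each of the three sizes is either $\lfloor n/3 \rfloor$ or $\lceil n/3 \rceil$. Since $n \ge 9$, every group is nonempty (in fact each has at least three robots), so the positional labels $L, M, R$ are well defined and each group is a candidate to hold a non-faulty robot.

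The key observation is that the desired conclusion, "at least two of the groups contain a non-faulty robot," is exactly the negation of "two of the groups are entirely faulty." To make two groups entirely faulty the adversary must declare faulty every robot in those two groups, and the cheapest way to do so is to target the two groups of smallest size. Hence it suffices to show that the two smallest groups together contain strictly more than $f$ robots, so that the $f$ available faults cannot cover both of them at once.

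For the balanced partition the two smallest blocks together contain exactly $\lfloor 2n/3 \rfloor$ robots, independently of the residue of $n$ modulo $3$ (one checks the three cases $n = 3m$, $3m+1$, $3m+2$). So the remaining task is the inequality $\lfloor 2n/3 \rfloor > f$, which I would derive from the hypothesis $f < \tfrac{2}{3}(n-1) = \tfrac{2n-2}{3}$. In each residue class $n \equiv 0,1,2 \pmod 3$ one has $\lfloor 2n/3 \rfloor \ge \tfrac{2n-2}{3}$, and since $f$ is an integer lying strictly below $\tfrac{2n-2}{3}$ this forces $f \le \lfloor 2n/3 \rfloor - 1$, i.e. $\lfloor 2n/3 \rfloor \ge f+1$.

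Putting these together, the two smallest of $G_L, G_M, G_R$ contain at least $f+1$ robots, so the $f$ faulty robots cannot exhaust both; at least two of the three nonempty groups therefore retain a non-faulty member, as required. I do not expect a genuine obstacle here: the only step needing care is the floor inequality, which is a routine case analysis on $n \bmod 3$, while the conceptual content is simply the covering argument that identifies "sum of the two smallest group sizes exceeds $f$" as the correct sufficient condition.
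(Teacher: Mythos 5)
Your proof is correct, but it takes a genuinely different route from the paper's. The paper does not use a balanced-thirds partition: it sets $k=\lfloor \frac n6-\frac 23\rfloor$, takes $G_L$ to be the $\lfloor n/2\rfloor-k-1$ leftmost robots, $G_R$ the $\lceil n/2\rceil-k-1$ rightmost robots, and $G_M$ the $2k+2$ middle robots, and then rules out, pair by pair and with separate arithmetic contradictions, the possibility that any two of the three groups consist entirely of faulty robots. Your argument compresses those three cases into a single covering inequality: any two groups together contain at least as many robots as the two smallest groups, which for the balanced partition is exactly $\lfloor 2n/3\rfloor$, and $\lfloor 2n/3\rfloor>f$ follows from $f<\frac 23(n-1)$ by the residue check you indicate (your floor computations are correct in all three residue classes). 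This is cleaner, avoids the paper's slightly delicate intermediate bound $f\leq \frac n2+k$, and in fact works for any $n\geq 3$, so the hypothesis $n\geq 9$ is not even needed for your version. What the paper's less natural partition buys is downstream compatibility: the sizes $\lfloor n/2\rfloor-k-1$, $2k+2$, $\lceil n/2\rceil-k-1$ are hard-coded into Algorithm~\ref{alg:CR-5} in the proof of Theorem~\ref{thm:UB-2n/3}, which invokes this lemma for precisely that partition (in particular $G_M$ has even size with designated extreme robots $A_l$ and $A_r$). Since the lemma is stated purely existentially, your proof establishes it as stated; but if your partition were substituted into the paper, the algorithm and its competitive-ratio analysis would need to be re-checked with the balanced group sizes (the analysis appears to rely only on the lemma's conclusion and not on the particular sizes, so this should go through, but it is not automatic).
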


We are now ready to prove Theorem \ref{thm:UB-2n/3}. We present a rendezvous algorithm for the case $f<\frac 23 (n-1)$ whose competitive ratio is 5.
\begin{proof}
(Theorem \ref{thm:UB-2n/3})
Let $f< \frac 23 (n-1)$. As we discussed earlier we may assume that $n\geq 9$, as for the case $n\leq 8$ we obtain a competitive ratio of 5 by Theorem \ref{mainthm}. Therefore we can use Lemma \ref{lem:UB-even-3-groups} to split the robots into three groups $G_l$, $G_M$, and $G_R$. Consider the following rendezvous algorithm:
\vspace{-0.3cm}
\begin{algorithm}[H]
\caption{}
\label{alg:CR-5}
\begin{algorithmic}[1]
\State  {The robots broadcast their coordinates, and split into three groups as follows.}
	\Indent
	\State $G_L$ contains the $\lfloor\frac{n}{2}\rfloor-k-1$ leftmost robots
	\State $G_R$ contains the $\lceil\frac{n}{2}\rceil-k-1$ rightmost robots.
	\State $G_M$ contains the $2k+2$ middle robots.
	\EndIndent
\State {Let $A_l$ and $A_r$ be the leftmost and the rightmost robots of $G_M$, respectively. Moreover let $m_l$ and $m_r$ be the initial positions of $A_l$ and $A_r$, respectively. For the robot $A_l$, sequence all the other robots based on their distances to $A_l$ such that the robots with shorter distances appear earlier in the sequence, denote the sequence by $S_l$. Do the same for $A_r$, and let $S_r$ denotes its corresponding sequence.}
\While {the rendezvous has not occurred}
\State{ The robots in $G_L$ move at full speed to the right, and when they meet $A_l$ stick to $A_l$.}
\State{ The robots in $G_R$ move at full speed to the left, and when they meet $A_r$ stick to $A_r$.}
\State{The robots in interval $[m_l,\frac{m_l+m_r}{2})$ move towards $A_l$ and when they meet $A_l$ stick to it.} 
\State{The robots in interval $[\frac{m_l+m_r}{2}, m_r]$ move towards $A_r$, and when they meet $A_r$ stick to it.}
\State{The robot $A_l$ moves to the robot next in the sequences $S_l$, until it meets $A_r$. Then it sticks to $A_r$.}
\State{The robot $A_r$ moves to the robot next in the sequences $S_r$, until it meets $A_l$. Then it sticks to $A_l$.}
\State{When $A_l$ and $A_r$ meet they stick to each other. Then they sequence the robots based on their distances to the location of their meeting in such a way that the robots closer to the meeting point appear earlier in the sequence. Denote the sequence by $S$. The robots $A_l\cup A_r$ move to the next robot in the sequence $S$.}
\EndWhile
\end{algorithmic}
\end{algorithm}
We now analyze the competitive ratio of the above algorithm. As seen in Figure \ref{fig:3-groups}, define
\begin{itemize}
\item $B_l$: the rightmost robot in $[m_l,\frac{m_l+m_r}{2})$.
\item $B_r$: the leftmost robot in $[\frac{m_l+m_r}{2}, m_r]$.
\item $C_l$: the last robot in $G_L$ that $A_l$ meets before $A_l$ moves to visit $B_r$.
\item $C_r$: the last robot in $G_R$ that $A_r$ meets before $A_r$ moves to visit $B_l$.
\item $d_1:$ the distance between $A_l$ and $B_l$. 
\item $d_2:$ the distance between $A_r$ and $B_r$.
\item $d_3:$ the distance between $A_l$ and $C_l$. 
\item $d_4:$ the distance between $A_r$ and $C_r$.
\item $x$: the distance between $A_l$ and $A_r$. 
\end{itemize}
\begin{figure}[!htb]
\begin{center}
\includegraphics[width=10cm]{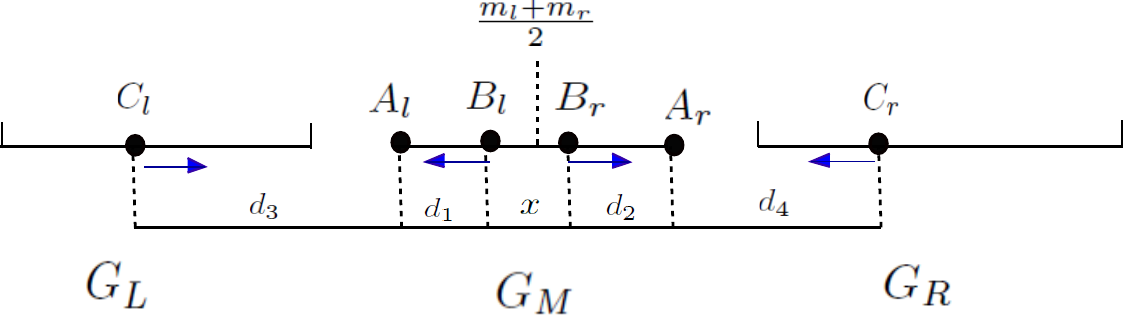}
\end{center}
\caption{}
\label{fig:3-groups}
\end{figure}
The following inequalities follow immediately.
\begin{itemize}
\item[(1)] $A_l$ meets $C_l$ before $B_r$: $d_3\leq d_1+x$.
\item[(2)] $A_r$ meets $C_r$ before $B_l$: $d_4\leq d_2+x$.
\item[(3)] Without loss of generality assume that $d_1\leq d_2$.
\end{itemize}

Let $M_l$ be the group of the robots which stick to $A_l$ before a meeting with $A_r$, see Figure \ref{fig:3-groups-proof}. More precisely $M_l$ contains $C_l$ and all the robots to the right of $C_l$, and $B_l$ and all the robots to the left of $B_l$. Similarly define $M_r$ to be the group of the robots which stick to $A_r$ before a meeting with $A_l$. Then $M_r$ contains $C_r$ and all the robots to its left, and $B_r$ and all the robots to its right. 
\begin{figure}[!htb]
\begin{center}
\includegraphics[width=10cm]{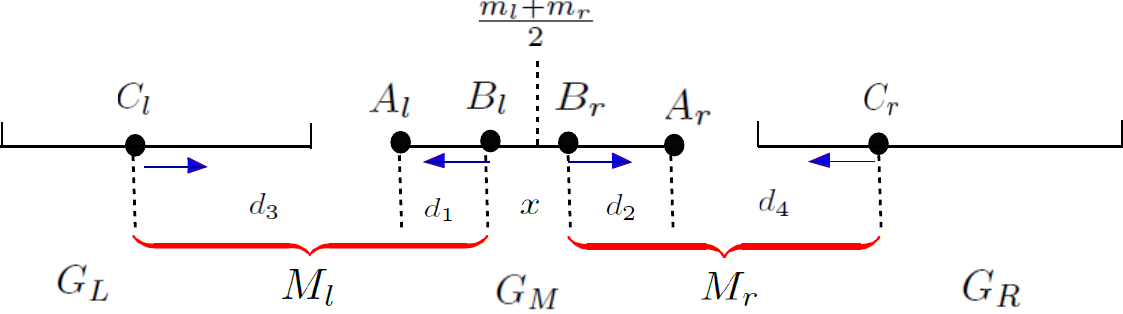}
\end{center}
\caption{}
\label{fig:3-groups-proof}
\end{figure}

Consider the following three cases:

\vspace{.3cm}
\noindent{\bf Case 1. The rendezvous occurs among $M_l$ or $M_r$:} Without loss of generality assume that the rendezvous occurs among $M_L$. This implies that the non-faulty robots belong to $G_L$ and the interval $[m_l,\frac{m_l+m_r}{2})$. Let $a_l$ be the leftmost non-faulty robot of $G_L$ and $a_r$ be the rightmost non-faulty robot of $[m_l,\frac{m_l+m_r}{2})$. The rendezvous of Algorithm \ref{alg:CR-5} occurs when $A_l$ meets both $a_r$ and $a_l$. Suppose that $\delta_1$ and $\delta_2$ are the distances between $A_l, a_r$ and $A_l,a_l$ respectively. Since $A_l$ moves towards the closest robots then the rendezvous occurs at the time at most $\frac 32\max\{\delta_1,\delta_2\}$. Moreover $\max\{\delta_1,\delta_2\}$ is bounded above by the diameter of non-faulty robots. Therefore in this case the competitive ratio is at most 3.

\vspace{.3cm}
\noindent{\bf Case 2. The rendezvous occurs at the time of the meeting of $M_l$ and $M_r$:} 
%Sincethen at least one of $M_l$ and $M_r$ must contain a robot from $G_L$ or $G_R$.
The meeting of $M_l$ and $M_r$ occurs when the robots $C_l$ and $C_r$ meet. The robot $C_l$ moves to the right and the robot $C_r$ moves to the left, and thus their meeting occurs at time $$\frac{d_1+d_2+d_3+d_4+x}{2}.$$ By Inequalities (1), (2) and (3) we have
\begin{align*}
d_1+d_2+d_3+d_4+x &\leq 2d_1+2d_2+3x \\
&=4d_1+2x_1+3x\\
&=4(d_1+x)-x+2x_1
\end{align*}
By Lemma~\ref{lem:UB-even-3-groups} 
%and \ref{lem:UB-odd-3-groups} 
we know that  at least two of $G_L$, $G_M$, and $G_R$ contain non-faulty robots. This implies that the diameter of the non-faulty robots, $D$, is at least $\min\{d_1+x+z,d_2+x+y\}$. By Inequality (3) we have that $D\geq d_1+x$. Therefore
\begin{align*}
CR &\leq \frac{4(d_1+x)-x+2x_1}{d_1+x}\\
&=4+\frac{x_1-x_2}{d_1+x}\leq 5
\end{align*}  
\noindent{\bf Case 3. The rendezvous occurs after the meeting of $M_l$ and $M_r$:} This case occurs if there are non-faulty robots either to the left of $C_l$ or to the right of $C_r$. First assume that there are non-faulty robots both to the left of $C_l$ and to the right of $C_r$. Then the rendezvous occurs in at most two times the diameter of non-faulty robots. Therefore the competitive ratio is at most 2. Now suppose without loss of generality that there are only non-faulty robots to the right of $C_r$, and the rightmost non-faulty robot is $R$ at distance $\delta$ from $C_r$. Then at the time of the meeting of $M_l$ and $M_r$ the distance between the robots in $M_l\cup M_r$ and $R$ is $\delta$. So it takes at most $\frac{3}{2}\delta$ for $M_l\cup M_r$ to meet $R$. Therefore the rendezvous occurs at time $$\frac{d_1+d_2+d_3+d_4+x}{2}+\frac{3\delta}{2}$$ while the optimal time is at least $\frac{d_4+\delta}{2}$.

Note that $A_r$ meets $A_l$ before $R$, and thus $d_1+d_2+x\leq d_4+\delta$. Moreover $d_3\leq d_2+x$. Therefore
$$
\frac{d_1+d_2+d_3+d_4+x}{2}+\frac{3\delta}{2}\leq \frac{3(d_4+\delta)}{2}
$$
This implies that the competitive ratio in this case is at most 3.
\qed
\end{proof}
%
%%%%%%%5555
%
\section{Optimal Rendezvous Algorithms for at Most Two Faulty Robots}
\label{onefault:sec}
This section is dedicated to the study of optimal rendezvous algorithms when the number of faulty robots is small, i.e., for $f\in\{1,2\}$. 

The next theorem yields the competitive ratio for $f=1$ fault and is an immediate consequence of Theorems~\ref{mainthm}~and~\ref{mainthm1}.
%%
%\begin{corollary}
%\label{cor:optimal}
%If the number of faulty robots is strictly less than the number of non-faulty robots then the competitive ratio for solving the rendezvous problem is exactly 2.
%\end{corollary}
%%
%This implies that when the number of faulty robots is one we have an optimal online rendezvous algorithm.
%%
\begin{theorem}
\label{thm:one-Faulty}
For $n>2$ robots with $f=1$ faulty, the competitive ratio of the algorithm which shrinks the shortest interval 
%\ref{alg:ssi} 
is 2, and this is optimal.
\end{theorem}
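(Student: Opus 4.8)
The plan is to derive both the upper bound and the matching lower bound by invoking the two general results already established. Since the theorem claims the competitive ratio is exactly $2$ for the shortest-interval algorithm when $f=1$, I would prove $\mathrm{CR} \le 2$ as an upper bound and $\mathrm{CR} \ge 2$ as a lower bound, so the two meet.

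For the upper bound, I would appeal to Theorem~\ref{mainthm}, which guarantees that the shortest-interval shrinking algorithm achieves a competitive ratio of at most $f+1$. Substituting $f=1$ immediately yields a competitive ratio of at most $f+1 = 2$. This is the easy direction and requires essentially no work beyond the substitution, since the algorithm named in the statement (``the algorithm which shrinks the shortest interval'') is precisely the one analyzed in Theorem~\ref{mainthm}.

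For the lower bound, I would invoke Theorem~\ref{mainthm1}, which states that for any $n>2$ robots with $1 \le f \le n-2$ faulty, every rendezvous algorithm has competitive ratio at least $2$. Since $f=1$ satisfies $1 \le f \le n-2$ whenever $n \ge 3$, and the theorem restricts to $n>2$, the hypothesis of Theorem~\ref{mainthm1} is met, giving a universal lower bound of $2$ that applies in particular to the shortest-interval algorithm.

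Combining the two bounds gives exactly $2$, and the phrase ``this is optimal'' is justified because the lower bound holds for \emph{every} algorithm, so no algorithm can do better. I do not anticipate any real obstacle here: the result is stated in the excerpt as ``an immediate consequence of Theorems~\ref{mainthm} and~\ref{mainthm1},'' so the only care needed is to confirm that the shortest-interval algorithm is indeed the one to which Theorem~\ref{mainthm}'s bound $f+1$ applies, and that the side conditions ($n>2$, $1 \le f = 1 \le n-2$) are compatible. The sole minor subtlety worth a sentence is that the upper bound from Theorem~\ref{mainthm} and the lower bound from Theorem~\ref{mainthm1} are proven for the same parameter regime, so they can be legitimately combined into an exact value rather than merely sandwiching an interval.
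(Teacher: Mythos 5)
Your proof is correct and matches the paper's own argument exactly: the paper derives Theorem~\ref{thm:one-Faulty} as an immediate consequence of Theorem~\ref{mainthm} (upper bound $f+1=2$ for the shortest-interval algorithm) and Theorem~\ref{mainthm1} (lower bound $2$ for any algorithm). Your verification of the side conditions ($n>2$, $1\le f\le n-2$) is the only work required, and you have done it correctly.
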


It remains to consider the competitive ratio for $f=2$ faulty robots. By Corollary \ref{cor:optimal}, the competitive ratio of the rendezvous problem for $n \geq 5$ robots with two faulty is exactly 2. Therefore the only unknown case concerning two faulty robots is when $n=4$. In this section we present a rendezvous algorithm with optimal competitive ratio $1+\phi$, where $\phi = \frac{1+\sqrt{5}}{2}$ is the golden ratio.  We summarize the main result in the following two theorems.
%The following theorem shows the competitive ratio of $1+\phi$ for rendezvous of four robots two of which are faulty.
For the lower bound we prove:
\begin{theorem}
\label{thm:4-robots-thm}
Consider four robots exactly two of which are faulty. No rendezvous algorithm can have competitive ratio less than $1 + \phi$, where $\phi$ is the golden ratio.
\end{theorem}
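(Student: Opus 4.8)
The plan is to prove a lower bound of $1+\phi$ on the competitive ratio for $n=4$ robots with $f=2$ faulty by exhibiting an adversarial configuration together with an adversary strategy that, against any algorithm, forces the rendezvous time to be at least $(1+\phi)$ times the optimal offline time. Since $f=2$ and $n-f=2$, rendezvous requires only that the two non-faulty robots meet, but the catch is that the algorithm does not know which two of the four are non-faulty, and the adversary may choose the faulty pair \emph{after} observing the trajectories. By Lemma~\ref{lem:Opt-move-rules} I may assume all four robots move at full speed and never turn around between meetings, which drastically restricts the algorithm to a choice of an initial direction (left or right) for each robot, revised only at meeting events.

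First I would fix a symmetric starting configuration --- the natural candidate is four robots placed symmetrically about the origin, say at positions $-1-a,\,-b,\,b,\,1+a$ or more simply two pairs at carefully chosen coordinates so that the offline optimum (which is $D/2$ for the diameter $D$ of whichever pair turns out non-faulty) is small while any online algorithm is forced to waste time. The key structural observation is that at the start every robot moves at unit speed in some committed direction; by a symmetry/pigeonhole argument the adversary can always name a non-faulty pair whose two members are initially moving in a way that postpones their meeting. The adversary's power is exactly that it gets to pick \emph{which} two of the four robots are the ones that must meet, and it will pick the pair for which the algorithm's committed motion is least helpful.

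The heart of the argument is a case analysis on the initial direction assignment. For each of the (few, up to symmetry) ways the algorithm can orient the four robots, I would compute the earliest time at which the adversarially-chosen non-faulty pair can be guaranteed to have met, compare it to $D/2$ for that pair, and then \emph{optimize the configuration parameters} (the gaps $a,b$ above) to equalize the competitive ratios across the adversary's best responses. Setting these competing ratios equal is where the golden ratio enters: the balancing condition will reduce to a quadratic equation of the form $t^2 = t+1$ (equivalently $t = 1 + 1/t$), whose positive root is $\phi$, yielding the bound $1+\phi$. I expect the worst case to pit the cost of two robots meeting head-on after having first been diverted toward their outer neighbors against the cost of waiting for a robot to traverse back; the ratio of the two gap lengths at the optimum will satisfy the golden-ratio recurrence.

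The main obstacle will be making the adversary argument airtight: I must show that for \emph{every} algorithm (not just "reasonable" ones) the adversary has a response achieving the bound, which means correctly enumerating all initial direction patterns modulo the configuration's symmetry and arguing that meetings among robots cannot be exploited by the algorithm to beat the bound --- in particular that an early meeting between two robots does not let a clever algorithm infer the faulty set, since the adversary can keep its committed faulty robots behaving identically to non-faulty ones up to the critical moment. Handling the re-routing that Lemma~\ref{lem:Opt-move-rules} permits at meeting points, and verifying that the adversary can always delay revealing information so that the decisive meeting is forced to happen at the golden-ratio-balanced time, is the delicate part; the configuration parameters must be tuned so that no matter how the algorithm reacts to an intermediate meeting, two of the three adversarial scenarios remain consistent and force ratio $1+\phi$.
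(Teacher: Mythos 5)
Your high-level strategy --- fix one adversarial configuration, use Lemma~\ref{lem:Opt-move-rules} to reduce every algorithm to a small finite set of behaviors, let the adversary name the non-faulty pair after seeing the trajectories, and tune the configuration so the competing ratios balance --- is indeed the paper's strategy. But the configuration family you commit to is a genuine flaw: a configuration symmetric about the origin (positions $-1-a,\,-b,\,b,\,1+a$, i.e.\ gaps $g,h,g$) provably cannot force competitive ratio $1+\phi$. Normalize $2g+h=1$ and evaluate the paper's six case-ratios (Table~\ref{tbl:cases} with $x=y=g$): the two binding cases are the ``split'' behavior ($b,c$ meet in the middle, then $b$ returns to $a$ and $c$ to $d$), whose worst pair gives ratio $\max\bigl\{\frac{1-g}{g},\frac{1}{1-g}\bigr\}$, and the ``pair-up'' behavior ($a,b$ meet and $c,d$ meet, then the groups merge), whose worst pair gives ratio $\frac{1}{1-2g}$; each of the remaining four behaviors has ratio at least $\frac1g$, which is not binding. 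Maximizing the minimum over $g$ forces $\frac{1-g}{g}=\frac{1}{1-2g}$, i.e.\ $2g^2-4g+1=0$, so $g=1-\frac{\sqrt2}{2}$ and the best lower bound your family can yield is $1+\sqrt2\approx 2.414$, strictly below $1+\phi\approx 2.618$. In other words, on every symmetric configuration the algorithm has an escape route (the split or pair-up behavior) cheaper than $1+\phi$, and no refinement of the adversary argument can close that; your predicted balancing quadratic $t^2=t+1$ simply does not arise from this family (it produces $\sqrt2$, not $\phi$).

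The missing idea is that the worst-case configuration is \emph{asymmetric}: in the paper the three gaps are in ratio $\phi:\phi:1$, i.e.\ $x=\phi y$ and $1-x-y=\phi y$, equivalently $(x,y)=\bigl(\frac{1}{1+\phi},\frac{1}{\phi(1+\phi)}\bigr)$, and the golden ratio enters through this gap ratio; at that point all six case functions $f_i$ equal $1+\phi$ simultaneously, which is exactly the equalization you were aiming for but needs the extra degree of freedom that symmetry destroys. Two smaller points. First, the clean and complete enumeration is over the possible \emph{orders of meetings} (the paper's six-leaf tree), not over initial direction assignments: by Lemma~\ref{lem:Opt-move-rules} the meeting order determines the entire trajectory set, so the case analysis closes without any separate treatment of ``re-routing at meetings.'' Second, your worry that the adversary must carefully delay revealing information is vacuous in this model: there is no communication during execution and a failed task attempt need not identify the faulty robots, so the trajectories are fixed functions of the initial positions and nothing observed mid-run can be exploited by the algorithm.
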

The proof of Theorem~\ref{thm:4-robots-thm} is based on an exhaustive analysis and considers the competitive ratio of any potential algorithm solving the rendezvous problem for the four robots.
For the upper bound we prove:
\begin{theorem}
\label{thm:4-robots-UB}
Consider four robots exactly two of which are faulty. There is a rendezvous algorithm for four robots two of which are faulty with competitive ratio at most $1+\phi$, where $\phi$ is the golden ratio.
\end{theorem}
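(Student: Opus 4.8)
The plan is to exhibit an explicit set of trajectories for the four robots and to verify, pair by pair, that whichever two of them are non-faulty they meet within $(1+\phi)$ times the offline optimum $d/2$, where $d$ is their initial separation. I first normalize. Since the competitive ratio is invariant under translation, reflection and positive scaling (the meeting time and $d/2$ scale the same way), I may place the robots at $p_1=0\le p_2=u\le p_3=v\le p_4=1$ and, using reflection, assume the left gap is at least the right gap. Writing the three consecutive gaps as $g_1=u$, $g_2=v-u$, $g_3=1-v$, the configuration is described by two parameters. Because $n-f=2$, rendezvous is achieved exactly when the (unknown) non-faulty pair coincides, so the algorithm must simultaneously guarantee $2T_{ij}\le(1+\phi)\,d_{ij}$ for all six pairs $\{i,j\}$, where $T_{ij}$ is the first time robots $i$ and $j$ coincide under the prescribed trajectories. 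By Lemma~\ref{lem:Opt-move-rules} I may restrict to trajectories in which every robot moves at full speed and reverses only at meetings, so a trajectory is determined by an initial direction together with the points at which robots merge.

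Next I design the schedule. The essential difficulty is that an inner robot is pulled in both directions: it must be ready to meet an outer neighbour on one side and the other inner robot (or the far group) on the other, and with two of four robots faulty it cannot tell which partner is genuine. I therefore use an adaptive schedule governed by a single parameter $\lambda$: robots first move to collapse gaps into merged groups (each group thereafter travelling together and, on meeting a new robot, continuing toward the next-nearest robot, in the spirit of the sequencing rule of Algorithm~\ref{alg:CR-5}), but the order in which gaps are collapsed, and the (not necessarily mid-) points at which the merges occur, are chosen according to the ordering of $g_1,g_2,g_3$ and scaled by $\lambda$. The intention is that a robot which has already stuck to a faulty partner can still reach its true partner, possibly lying on the far side, within the budget; the value of $\lambda$ is fixed at the end to balance the two worst cases.

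The verification is a finite case analysis. The six pairs fall into three types: the three adjacent pairs $\{1,2\},\{2,3\},\{3,4\}$, the two skip-one pairs $\{1,3\},\{2,4\}$, and the extreme pair $\{1,4\}$; the configurations split into regimes according to which of $g_1,g_2,g_3$ is largest or smallest (reflection halving the number of regimes). In each (type, regime) combination I bound $T_{ij}$ by the total distance the relevant robot travels before coinciding with its partner, a sum of gap lengths and merge-offsets, and compare it with $d_{ij}$, which is itself a sum of consecutive gaps. For the easy pairs (those merged early, or the far-apart extreme pair with large denominator) the bound $1+\phi$ is comfortable; the binding constraints are a pair served early versus a cross pair whose true partner sits beyond an already-merged faulty robot. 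Equalizing the ratios of these two binding cases yields the relation $\lambda^2=\lambda+1$, whose positive root is $\lambda=\phi$, and substituting back gives an overall ratio of exactly $1+\phi=\phi^2$ in the tight regime and strictly less elsewhere.

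The main obstacle I expect is precisely this cross-pair and continuation analysis: ensuring that after a robot merges with its (possibly faulty) nearer neighbour, the merged group still reaches the genuine far partner in time, while not overspending on the adjacent and middle pairs. This is where the golden ratio is forced, and matching it against the lower bound of Theorem~\ref{thm:4-robots-thm} confirms both that the choice $\lambda=\phi$ is optimal and that no slack remains. Organizing the regimes so that they are exhaustive, and exploiting the reflection symmetry to avoid redundant cases, will be the bulk of the work; the per-case inequalities are routine once the schedule and $\lambda$ are fixed.
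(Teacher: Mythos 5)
There is a genuine gap: your proposal is a plan whose load-bearing steps are asserted rather than carried out. The algorithm is never actually defined --- ``collapse gaps in an order chosen according to the ordering of $g_1,g_2,g_3$, at merge points scaled by $\lambda$'' does not specify trajectories, so none of the quantities $T_{ij}$ can be computed or bounded. Consequently the entire verification (which pairs are ``comfortable,'' which constraints are binding, and the claim that equalizing them yields $\lambda^2=\lambda+1$) is exactly the content of the theorem, and it is stated as an expectation rather than proved. Worse, the free parameter $\lambda$ is inconsistent with the reduction you invoke at the outset: once you restrict, via Lemma~\ref{lem:Opt-move-rules}, to trajectories that move at full speed and reverse direction only at meetings, two robots heading toward each other necessarily meet at the midpoint of their current separation, so all merge points are determined by the choice of meeting \emph{order}; they are not an independently tunable continuous knob. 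There is therefore nothing to ``equalize'' to force $\lambda=\phi$. In the paper the golden ratio does not arise from tuning a trajectory parameter at all --- it arises as the location where the competitive-ratio functions of the discrete meeting orders cross the value $1+\phi$.

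What the actual proof does, and what your outline is missing, is the following. The six possible meeting orders for four robots (already enumerated for the lower bound, Theorem~\ref{thm:4-robots-thm} and Table~\ref{tbl:cases}) each come with an explicit worst-case ratio function $f_i(x,y)$, where $x,y$ are the outer gaps. Cases 1 and 2 are dominated by Cases 6 and 4 (Lemmas~\ref{lm1} and~\ref{lm2}), so only $f_3,\dots,f_6$ matter. For each of these one computes the exact region $R(f_i)=\{(x,y): f_i(x,y)\leq 1+\phi\}$ (Lemmas~\ref{lm3}--\ref{lm6}); these regions have $\phi$-dependent linear boundaries such as $x\geq\frac{1}{1+\phi}$ and $y\geq\frac{1}{1+\phi}-\frac{x}{1+\phi}$, and one then proves they cover the whole feasible triangle $0\leq y\leq x$, $x+y\leq 1$ (Lemma~\ref{lem:cover}). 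The algorithm (Algorithm~\ref{alg:frr}) simply reads off $(x,y)$ and executes the meeting order whose region contains it. Note in particular that the correct selection rule is governed by these $\phi$-dependent boundaries, \emph{not} by which of $g_1,g_2,g_3$ is largest or smallest, so even the qualitative shape of your regime decomposition is not obviously right. To repair your proposal you would need to (i) write down concrete trajectories (equivalently, a concrete meeting order) for each regime, (ii) compute the six pairwise ratios for each, and (iii) prove the regimes cover all configurations --- at which point you would essentially have reconstructed the paper's region-covering argument.
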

The proof of Theorem~\ref{thm:4-robots-UB} is a continuation of the proof of Theorem~\ref{thm:4-robots-thm} leading to a specific algorithm whose competitive ratio is optimal for the rendezvous problem considered.

%Before describing the main theorem for four robots we define the setup. 

\section{Conclusion}

In this paper we considered the rendezvous problem for $n>2$ robots on a line with $1 \leq f \leq n-2$ among them byzantine faulty.  The robots were equipped with GPS devices and they could communicate their positions to a central authority.  We designed several rendezvous algorithms and considered their competitive ratio depending on the knowledge the central authority has about the number of faulty robots. An interesting question remaining might be to improve the competitive of the algorithms presented.
Another question concerns the model presented here which ignores any communication beyond the broadcasting of the initial positions of the robots. It might be of interest to consider algorithms in a ``richer'' communication model where the robots may broadcast information as they follow their trajectories. For example, one could consider a model where the faulty robots may crash and non-faulty robots may report not meeting them when expected. 

\newpage

\bibliographystyle{plain}
\bibliography{refs,refs1}

\newpage
\appendix

\section{Proof of Lemma \ref{lem:Opt-move-rules}}

\begin{proof}(Lemma \ref{lem:Opt-move-rules})
Consider the algorithm $A$ with competitive ratio $\alpha$. If the movement of the robots in $A$ follows (1) and (2) then there is nothing left to prove. So suppose that $r$ is a robot whose movement does not follow (1), i.e., at some point of the execution of $A$ the robot $r$ changes its direction without meeting any robots. Consider the last time that $r$ changes its direction between the meetings. So let $y$ be the last point where $r$ changes its direction and there is no robot located at $y$. We can assume without loss of generality that $r$ was moving to the right when it meets $y$ and then $r$ starts moving to the left to meet another robot $r'$. Let $(x,y]$ be the largest interval for which $r$ moves without meeting any robot until it reaches $y$. Then $r$ moves to the left of $y$ to meet the robot $r'$. But if $r$ moves to the left from the point $x$ it will meet $r'$, at least $y-x$ sooner. So clearly if we cut off the movement of $r$ to the right during $(x,y]$, and instead let $r$ move to the left from the point $x$ then the rendezvous time of the modified algorithm is less or equal to that
of $A$. Therefore for all robots which change direction before a meeting we modify their trajectories as discussed for $r$. Let $B_1$ be the algorithm with the modified trajectories. Note that the worst
case rendezvous time of the algorithm $B_1$ is less than or equal to that of algorithm $A$.

The movement of all robots in $B_1$ follows rule (1). Now suppose that there is a robot $r$ whose movement in the algorithm $B_1$ does not follow (2), i.e., at some point of the execution of $B_1$ the robot $r$ either slows down or stops moving. First suppose that $r$ slows down at the point $x$. Since the rendezvous has not occurred yet we know that $r$ moves to meet another robot $r'$. It is clear that if $r$ moves at its full speed the meeting of $r$ and $r'$ occurs sooner. Therefore moving at full speed does not increase the rendezvous time of the algorithm $B_1$. Define $B_2$ to be the algorithm which copies all the steps of the algorithm $B_1$ with robots always moving at the full speed.

Now suppose that there is a robot $r$ which stops at point $x$ during the execution of $B_2$ before the rendezvous occurs. Note that if $r$ starts moving after a stop then clearly moving at full speed to the same direction does not increase the rendezvous time. 

Let $b_1, \ldots, b_m$ be the sequence of robots in the increasing order of their meeting time with $r$, after $r$ stops at $x$. Instead of waiting at $x$, we ask $r$ to move towards the next robot in the sequence $b_1, \ldots, b_m$. We now prove that this does not increase the meeting time of $r$ and $b_i$, $1\leq i\leq m$. Note that at some point the robot $b_i$ starts moving towards $r$. 

Let $d_i$ be the distance between $b_i$ and $r$ at the time when $b_i$ starts moving towards $r$. Moreover let $\delta_{i+1}$ be the time of the meeting of $b_{i+1}$ and $r$ from the moment that $b_i$ starts moving towards $r$. Since $r$ meets $b_i$ before $b_{i+1}$ then $d_i\leq \delta_{i+1}$. If $r$ moves towards $b_i$ they meet after $\frac{d_i}{2}$, and $r$ can move back to $x$ by moving another $\frac{d_i}{2}$. Therefore $r$ spends $d_i\leq \delta_{i+1}$ to visit $b_i$ and return to its position $x$. This implies that if $r$ moves towards the next robot in the sequence $b_1, \ldots, b_m$ the rendezvous time of Algorithm $B_2$ does not increase.    

We apply this to all robots such as $r$, and denote the modified algorithm by $B$. The rendezvous times of $B$ are less than or equal to those of $B_2$. 
Therefore $B$ is a rendezvous algorithm with competitive ratio less or equal to $\alpha$  in which the movement of all robots follow (1) and (2). 
\qed
\end{proof}

\section{Proof of Theorem~\ref{mainthm}}

\begin{proof} (Theorem~\ref{mainthm})
Consider the following algorithm which enables rendezvous of the robots by shrinking the smallest interval between robots until  $n-f$ non-faulty robots meet at the same point.

\vspace{-0.3cm}
\begin{algorithm}[H]
\caption{SSI (ShrinkShortestInterval)}
\label{alg:ssi}
\begin{algorithmic}[1]
\State{The robots broadcast their coordinates.}
\While{insufficient number of non-faulty robots have met}
\State{Compute the smallest distance between any two consecutive robots.}
\State{Select a(ny) pair of robots, whose distance is minimum; say these robots are $r, r'$ at distance $d$ and let $r$ lie to the left of robot $r'$.}
\State{Robots $r, r'$ move towards each other for distance $d/2$ until they meet. Moreover, all robots to the left of $r$ follow $r$ by shifting distance $d/2$ to the right, while robots to the right of $r'$ follow $r'$ by shifting distance $d/2$ to the left.}
\State{The two robots merge the groups they belong to.
%\footnote{Robots in a group move in synch as one robot; there are three types of groups: either with all robots faulty, or all robots non-faulty, or mixed with faulty and non-faulty.} 
} 
\EndWhile
\end{algorithmic}
\end{algorithm}
%The competitive ratio of the algorithm SSI is at most $1+f$, where $1 \leq f \leq n-2$ is the number of faulty robots.
%\subsection{Analysis of the algorithm}
Notice that when robots meet they can attempt the task in order to determine if
all of the non-faulty robots have met. Further, observe that the algorithm determines
trajectories for each of the robots (depending only upon the inter-robot distances) which the robots
follow until at least $n-f$ non-faulty robots have met. In the worst case they all converge on a single
point. 

Next we analyze the competitive ratio of the algorithm SSI.
Recall that at the beginning of the algorithm all the robots announce their coordinates. 
Let $x < y$ be the locations of the leftmost and rightmost non-faulty robots.
%in a
%minimum length interval
%containing a sufficient number of non-faulty robots.
%

Observe that all the robots to the left of position $x$ and to the right of position $y$ are 
faulty. Let $d_1 \leq d_2 \leq \cdots \leq d_{n-1}$ be the multiset of inter-robot distances in non-descending order.  Consider the inter-robot distances between all the robots whose starting positions are between the starting positions of $x$ and $y$ and denote them by $\delta_{\min} := \delta_1 \leq \delta_2 \leq \cdots  \leq \delta_{\max} := \delta_t$, where $\delta_{\min}, \delta_{\max}$ are the minimum and maximum inter-robot distances between $x$ and $y$.

Observe further that once the interval $\delta_{\max}$ is collapsed all of the non-faulty robots are together in one spot and rendezvous has occurred.  This implies that if $d_s := \delta_t$ in the multi-set of these distances then we have that the competitive ratio (of the algorithm), denoted by $CR$, satisfies the following equalities.
\begin{align*}
CR 
&= \frac{\frac{d_1 + d_2 + \cdots + d_s}{2}}{\frac{\delta_1 + \delta_2+ \cdots +\delta_t}{2}} 
%\mbox{ (from the SSI algorithm)}
= \frac{d_1 + d_2 + \cdots + d_s}{\delta_1 + \delta_2+ \cdots +\delta_t}\\
&= \frac{\delta_1 + \delta_2+ \cdots +\delta_t + \sum_{d_u \not\in \{ \delta_1 , \delta_2, \ldots ,\delta_t\}} d_u }{\delta_1 + \delta_2+ \cdots +\delta_t} \\
%\mbox{ (splitting the sum)}
&= 1 + \frac{\sum_{d_u \not\in \{ \delta_1 , \delta_2, \ldots ,\delta_t\}} d_u }{\delta_1 + \delta_2+ \cdots +\delta_t}  .
\end{align*}

Consider a distance $d_u \not\in \{ \delta_1 , \delta_2, \ldots ,\delta_t\}$ occurring in the sum in the numerator of the last fraction above. Observe that unless $d_u \leq \delta_{\max}$, the SSI algorithm above will choose to shrink one of the intervals in the set $\{ \delta_i \}$ of distances of robots between $x$ and $y$. 
%Moreover, by the time the algorithm considers a distance $d_u \not\in \{ \delta_1 , \delta_2, \ldots ,\delta_t\}$ such that $d_u > \delta_t$, all the non-faulty robots would have met. 
%Since every robot outside the interval $[x,y]$ is
%either faulty or not required for the task and there are at most $f$ faulty robots, using the task as a test the robots inside the interval $[x,y]$ would know rendezvous has been completed when at least $n-f$ non-faulty robots between $x$ and $y$ meet.

Observe that there are at most $f$  robots initially located outside the interval $[x,y]$. From this we conclude that
\begin{align}
CR & \label{distances0}
\leq 1 + \frac{f \delta_{\max} }{\delta_1 + \delta_2+ \cdots +\delta_t} %\mbox{ (since $\delta_{\max} = \max_r \delta_r$)} 
\\
& \label{distances1}
\leq 1+ f .
\end{align}
This completes the proof of Theorem~\ref{mainthm}.
\qed
\end{proof}

{\it Remark:} The bound of $f+1$ is tight in that it easy to construct an example where the ratio
is as close to this bound as desired. For example, consider $f$ faulty robots at positions 1 through $f$ on the line,
one non-faulty robot at $f+1$ and the remaining $n-f-1$ robots at position $f+2+\epsilon$,
for any $\epsilon > 0$. Following
the algorithm, rendezvous does not occur until time $f/2 + \frac{1+\epsilon}{2}$ but could have
occurred at time $\frac{1+ \epsilon}{2}$ with a competitive ratio of $1 + \frac{f}{1+\epsilon}$.

\section{Proof of Theorem \ref{thm:UB-12}}

\begin{proof} (Theorem \ref{thm:UB-12})
Let $a_1, \ldots, a_n$ be the positions of the robots, and $d$ be the minimal distance between the robots, i.e.,
$d=\min_{1\leq i<j\leq n}|a_i-a_j|.$
For any initial position of the robot $r_i$ choose a rational number $\frac{p_i}{q_i}$ in the interval $(a_i-\frac{\epsilon d}{24}, a_i+\frac{\epsilon d}{24})$. Scale the real line as follows. Let $q= {\rm lcm} (q_1,\ldots, q_n)$ be the least common multiple of the denominators. Map $x\in \mathbb{R}$ to $qx$. Then every robot $r_i$ has an integer position denoted by $a_i^* = qp_i/q_i$. Note that by the choice of the rational numbers $\frac{p_i}{q_i}$, the order of $a_i^*$s on the line is the same as the order of $a_i$s.
Now run Algorithm \ref{alg:6D} for the $n$ robots at their new positions $a_1^*,\ldots, a_n^*$. Let $D$ be the diameter of the non-faulty robots with original positions $a_1, \ldots, a_n$, i.e. $D=a_M-a_m$ where $a_M$ is the greatest position of a non-faulty robots and $a_m$ is the smallest one.  

Similarly, let $D^*$ be the diameter of the non-faulty robots with integer positions $a_1^*, \ldots, a_n^*$, i.e. $D^*=a_M^*-a_m^*$ where $a_M^*$ is the greatest integer position of the non-faulty robots and $a_m^*$ is the smallest one. Elementary calculations using the definitions of $D$ and $D^*$ above show that $D^*  \leq q( D+\frac{\epsilon d}{12})$. 

By Lemma \ref{lem:UB-6d} we have that the robots at (integer) positions $a_M^*$ and $a_m^*$ meet in time smaller than $6D^*\leq 6q (D+\frac{\epsilon d}{12})$. Moreover by Lemma \ref{lem:UB-6d} all the robots in the interval $[a_m^*, a_M^*]$ meet by that time.

It follows that the rendezvous time of the robots in their original positions is bounded from above by  $(6+\frac{\epsilon d}{2})D$ while the optimal offline time is equal to $D/2$. Hence for any $\epsilon>0$ the competitive ratio is less than $\frac{(6+\epsilon d/2)D}{D/2}=12+\epsilon d /D \leq 12 + \epsilon$.
\qed
\end{proof}

\section{Proof of Theorem \ref{thm:main-MTC}}

\begin{proof}
(Theorem \ref{thm:main-MTC})
Since $f\leq \frac{n-1}{2}$, $n\geq 2f+1$. Consider the following rendezvous algorithm. 

\vspace{-0.3cm}
\begin{algorithm}[H]
\caption{ MTC (Move Towards the Center)}
\label{alg:mtc}
\begin{algorithmic}[1]
\State{The robots broadcast their coordinates.}
\If{$n\geq 2f+2$}
\State{The robots split into two groups $G_L$ and $G_R$ consisting of the $\lfloor n/2 \rfloor$ leftmost and $\lceil n/2 \rceil$ rightmost robots, respectively.}
\State{The robots in $G_L$ move right with speed $1$ and the robots in
$G_R$ move left with speed $1$.}
\State{The first two robots (one from $G_L$, one from $G_R$ rendezvous and make a group $A_0$. Go to 10.}
\EndIf
\If{$n=2f+1$}
\State{The robots split into two groups $G_L$ and $G_R$ consisting of the $\lfloor n/2 \rfloor$ leftmost and $\lfloor n/2 \rfloor$ rightmost robots.}
\State{The robots in $G_L$ move right with speed $1$ and the robots in
$G_R$ move left with speed $1$.}
\State{Let $A_0$ be the middle robot.}
\EndIf
\State{The robot $A_0$ computes its distance with all the robots to its right and left, and arranges them in the increasing order of their distances to itself as $b_1, \ldots, b_m$. }
\State{Let $i=0$.}
\While{$A_i$ does not contain a sufficient number of non-faulty robots} 
\State{The group $A_i$ moves towards the robot $b_{i+1}$ to form the group $A_{i+1}$.}
\State{$i=i+1$}
\EndWhile
\end{algorithmic}
\end{algorithm}
Recall that the robots in the groups $A_i$ can use the task to determine if $n-f$ non-faulty 
robots have met. Now we analyze the competitive ratio of the algorithm MTC.

We consider two cases:

{\it Case 1: $n>2f+1$.}
In this case both $G_L$ and $G_R$ are guaranteed to contain at least one non-faulty robot. Also observe
that neither $G_L$ nor $G_R$ contains $n-f$ robots, i.e., for rendezvous to occur at least one robot
from each of $G_L$ and $G_R$ must be involved. Let $r_L$ (respectively, $r_R$) be the rightmost
(leftmost) robot in $G_L$ ($G_R$) which is contained in an minimum length interval containing $n-f$
non-faulty robots. Let $q_L$ (respectively, $q_R$) be the rightmost (respectively, leftmost) robot in
$G_L$ (respectively, $G_R$). Let $x$ be the distance from $r_L$ to $q_L$, $y$ the distance from
$q_L$ to $q_R$ and $z$ the distance from $q_R$ to $r_R$. Then an offline algorithm can
perform rendezvous in $(x+y+z)/2$. Observe that $A_0$ is formed at time $y/2$, and that $r_L$
(respectively, $r_R$) joins
the central group at time less or equal to $x$ (respectively, $y$) after that. Therefore rendezvous occurs at time at most $y/2 + \max(x,z)$,
i.e., the competitive ratio is at most 2. 

{\it Case 2: $n=2f+1$.} Here there are two subcases:

{\it Case 2a: The robot in $A_0$ is faulty.}
In this case, there must be a non-faulty robot 
that is in $G_L$, as well as one in $G_R$. At this point we may argue as in Case 1 with 
$y=0$. 

{\it Case 2b: The robot in $A_0$ is non-faulty.} In this case, either the minimum length interval 
containing $n-f$ robots has at least one robot in each of $G_L$ and $G_R$, in which case 
we argue as above, or it doesn't. In this second case, the interval ends with the robot in $A_0$
with all of the non-faulty robots in either $G_L$ or $G_R$. Say its $G_L$. In this case, the time until rendezvous
is the time for the leftmost robot in $G_L$ to join the central group, which is at most the distance
between that robot and $A_0$ which is twice the optimal time for rendezvous, i.e., the
competitive ratio is at most 2. 

\qed
\end{proof}

\section{Proof of Lemma~\ref{lem:UB-even-3-groups}}

\begin{proof} (Lemma~\ref{lem:UB-even-3-groups})
We know that $f< \frac 23 (n-1)=\frac n2+\frac n6-\frac 23$. Let $k=\lfloor \frac n6-\frac 23\rfloor$. Since $n\geq 9$ we have $k\geq 0$. Split the robots as follows:
\begin{itemize}
\item $G_L$: the group of $\lfloor \frac{n}{2}\rfloor-k-1$ leftmost robots.
\item $G_R$: the group of $\lceil\frac{n}{2}\rceil-k-1$ rightmost robots.
\item $G_M$: the group of the remaining $2k+2$ robots in the middle.
\end{itemize}
Next assume to the contrary that:
\begin{itemize}
\item
$G_L$ and $G_R$ contain all faulty robots: Then
$$f\geq \left(\left\lfloor\frac{n}{2}\right\rfloor-k-1\right)+\left( \left\lceil\frac{n}{2}\right\rceil-k-1\right)=n-2k-2.$$
Since $f< \frac 23 (n-1)$ then $n-2k-2< \frac{2n}{3}-\frac 23$. This implies that $k>\frac n6-\frac23$, which is a contradiction.
\item
$G_M$ and $G_L$ contain all faulty robots: Then
$$f\geq \left(\left\lfloor\frac{n}{2}\right\rfloor-k-1\right)+(2k+2)=\left\lfloor\frac{n}{2}\right\rfloor+k+1.$$
Since $f\leq\frac{n}{2}+k$ then $\lfloor\frac{n}{2}\rfloor+k+1\leq \frac{n}{2}+k$. This implies that $k+1\leq k+\frac n2-\lfloor\frac{n}{2}\rfloor$. But this is a contradiction since $\frac n2-\lfloor\frac{n}{2}\rfloor<1$. 
\item
$G_M$ and $G_R$ contain all faulty robots: Then
$$f\geq \left(\left\lceil\frac{n}{2}\right\rceil-k-1\right)+(2k+2)=\left\lceil\frac{n}{2}\right\rceil+k+1.$$
Since $f\leq\frac{n}{2}+k$ then $\lceil\frac{n}{2}\rceil+k+1\leq f\leq\frac{n}{2}+k$. This implies that $k+1\leq k+\frac n2- \lceil\frac{n}{2}\rceil$. But this is a contradiction since $\frac n2-\lfloor\frac{n}{2}\rfloor <1$. 
\end{itemize}
Therefore at least two of $G_L, G_M$, and $G_R$ contain non-faulty robots.
\qed 
\end{proof}

\section{Proof of Theorem \ref{thm:4-robots-thm}}

The four robots are labeled $a,b,c,d$ and are drawn left to right in this order as depicted in Figure~\ref{fig:cases-line}. Without loss of generality we may assume that robot $a$ occupies position $0$ and robot $d$ position $1$.
\begin{figure}[!htb]
\begin{center}
\includegraphics[width=8cm]{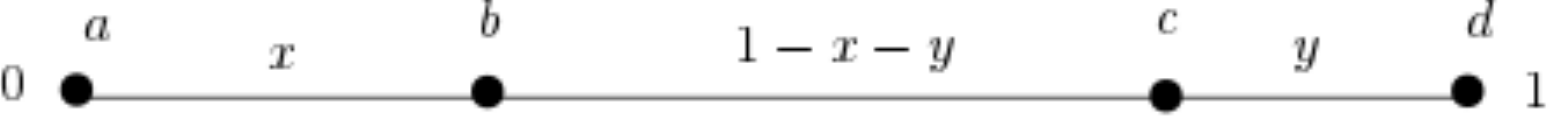}
\end{center}
\caption{The line segment $[0,1]$ and four robots $a, b, c, d$ at respective consecutive distances $x, 1-x-y, y$. Without loss of generality we may assume that $y \leq x, 1-x$.}
\label{fig:cases-line}
\end{figure}
Two of the robots are faulty but it is not known exactly which two. The robots start moving at the same time and move with maximum speed $1$. The corresponding initial distances of the robots are as follows: $a, b$ are at distance $x$, $b, c$ at distance $1-x-y$, and $c,d$ at distance $y$. Without loss of generality we may assume $0 \leq y \leq x \leq 1$ and $y \leq 1-x$. 

The rest of this section is devoted to the proof of Theorem \ref{thm:4-robots-thm}.
We want to consider any potential algorithm solving the rendezvous problem for the four robots as depicted in Figure~\ref{fig:cases-line}. 
It turns out that the worst case competitive ratio occurs when $x=\phi y, 1-x-y = x$, where $\phi = \frac{1+\sqrt{5}}{2}$ is the golden ratio. Therefore, $\phi y + \phi y + y  = 1$ and hence $y = \frac{1}{1+2 \phi} = \frac{1}{2 + \sqrt{5}}$ and $x = \phi y = \frac{1+\sqrt{5} }{4 + 2\sqrt{5}}$. 

As stated in Lemma~\ref{lem:Opt-move-rules} we need only consider rendezvous algorithms in which the movement of the robots obey rules (1) and (2) of Lemma \ref{lem:Opt-move-rules}. Figure~\ref{fig:cases-tree} depicts the order in which the rendezvous of the four robots may occur in six possible cases.
\begin{figure}[!htb]
\begin{center}
\includegraphics[width=8cm]{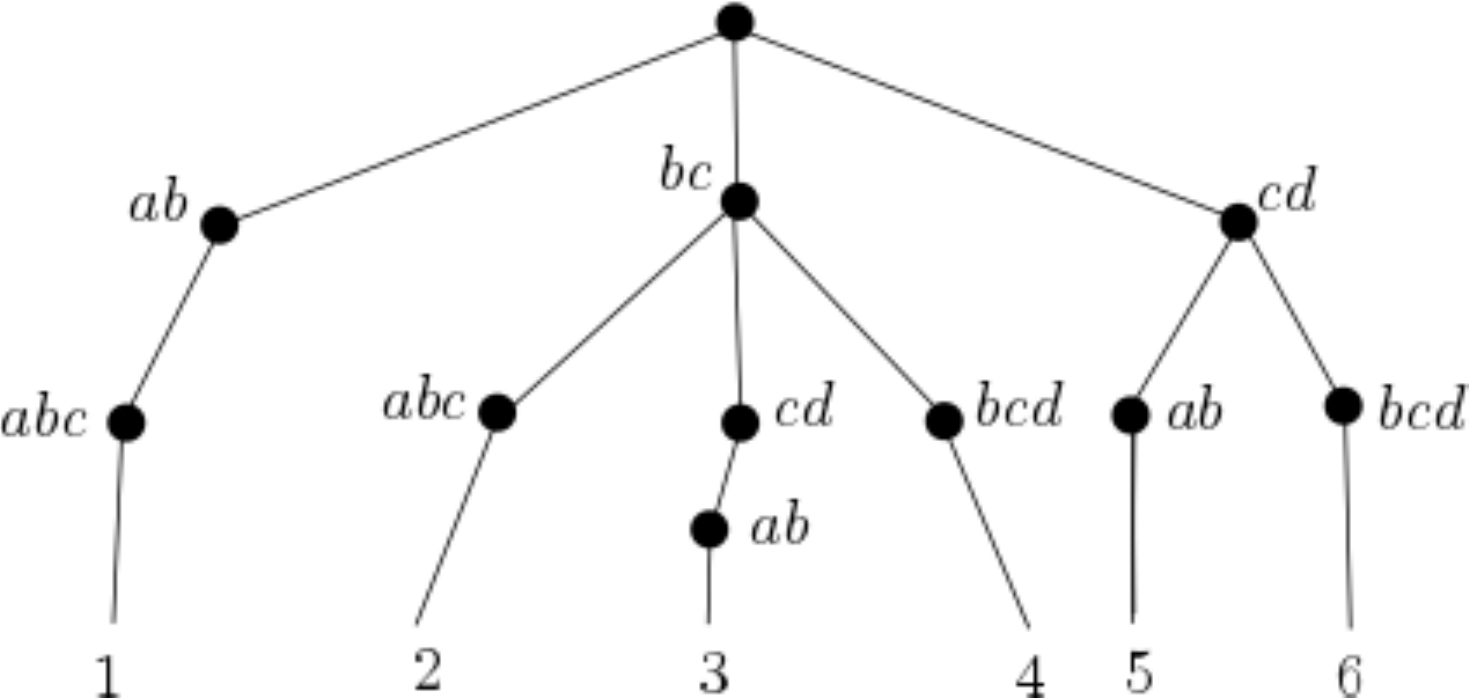}
\end{center}
\caption{Tree representing the six possible cases of rendezvous for four robots $a, b, c, d$ on a line. For each leaf of the tree the unique path of the tree depicts the sequence of consecutive rendezvous taking place until all four robots meet. Each number on a leaf represents the corresponding sequence of rendezvous events from the root to this leaf.}
\label{fig:cases-tree}
\end{figure}
For example, either of the rendezvous $ab, bc, cd$ may occur first. If the rendezvous $bc$ occurs first then it may followed by any of the three possible rendezvous $abc, cd, bcd$, etc.  The other cases are similar. From Figure~\ref{fig:cases-tree} it is easily seen that there are six cases which describe any possible rendezvous algorithm of the four robots. 
%To simplify, without loss of generality we may assume that $y \leq x$. 
Next we need to analyze the six cases and determine upper and lower bounds on the rendezvous times.

As depicted in Table~\ref{tbl:cases}, there are six functions $f_i (x,y)$, for $i=1,\ldots , 6$, arising, and each of which represents the competitive ratio of rendezvous time for the corresponding case. The first column depicts the Case being considered, the second column the pairs of the non-faulty robots which rendezvous in optimal time, the third column the pairs of the non-faulty robots with non-optimal rendezvous, the fourth column the sequence of the distances moved, and the last column the competitive ratio of the corresponding case.
\begin{table}[htp]
\begin{center}
\begin{tabular}{| c | c | c | c | c |}
\hline
Case & Optimal & Non-Optimal & Distance Moved & Competitive Ratio \\
\hline
1 & $ad, ab, ac$ & $bc, cd, bd$ & $\frac x2, \frac{1-x-y}2, \frac y2$ & $f_1 (x, y) := \max \left\{ \frac{1-y}{1-x-y}, \frac{1}{y} \right\}$ \\
\hline
2 & $ad, bc, ac$ & $ab, bd, cd$ & $\frac{1-x-y}2, \frac x2, \frac y2$  & $f_2 (x, y) := \max \left\{ \frac{1-y}{x}, \frac{1}{y} \right\}$ \\
\hline
3 & $ad, bc$ & $ab, ac, bd, cd$ & $\frac{1-x-y}2, \frac y2, \frac{x-y}2 ,\frac y2$ & $f_3 (x, y) := \max \left\{ \frac{1}{1-x}, \frac{1-x}{y} \right\}$ \\
\hline
4 & $ad, bc, bd$ & $ab, ac, cd$ & $\frac{1-x-y}2, \frac y2, \frac x2$ & $f_4 (x, y) := \max \left\{ \frac{1}{x}, \frac{1-x}{y} \right\}$ \\
\hline
5 & $ad, ab, cd$ & $ac, bd, bc$ & $\frac y2, \frac {x-y}2, \frac {1-x}2$ & $f_5 (x, y) := \frac{1}{1-x-y}$ \\
\hline
6 & $ad, cd, bd$ & $ab, ac, bc$ & $\frac y2, \frac{1-x-y}2, \frac x2$ & $f_6 (x, y) := \max \left\{ \frac{1}{x}, \frac{1-x}{1-x-y} \right\}$ \\
\hline
\end{tabular}
\end{center}
\caption{The six rendezvous cases enumerated in column 1. The second column depicts the pairs of non-faulty robots with optimal rendezvous, the third column the pairs of non-faulty robots with non-optimal rendezvous, the fourth column the times the different rendezvous between robots occur, 
%the third column the optimal rendezvous pairs, 
and the fifth (rightmost) column the competitive ratio of the rendezvous pairs of robots.
%: when a value is crossed it indicates that it is dominated by another value in this case and therefore it can be omitted.
}
\label{tbl:cases}
\end{table}%
In sequel we explain the details of Table~\ref{tbl:cases}. 
First note that for the four robots $a, b, c$, and $d$ there are six possible combinations of two non-faulty robots: $ab, ac, ad, bc, bd$, and $cd$. Also recall that the movement of all the robots follow rules (1) and (2) of Lemma \ref{lem:Opt-move-rules}, {\em i.e}, they do not change direction between the meetings and they move at the full speed. 

\vspace{.2cm}
\noindent{\bf Case 1 Rendezvous:} As depicted in Figure \ref{fig:case1}, robots $a$ and $b$ meet first, then $ab$ meets $c$ and at the end $abc$ and $d$ rendezvous. 
\begin{figure}[ht]
\begin{center}
\includegraphics[width=8cm]{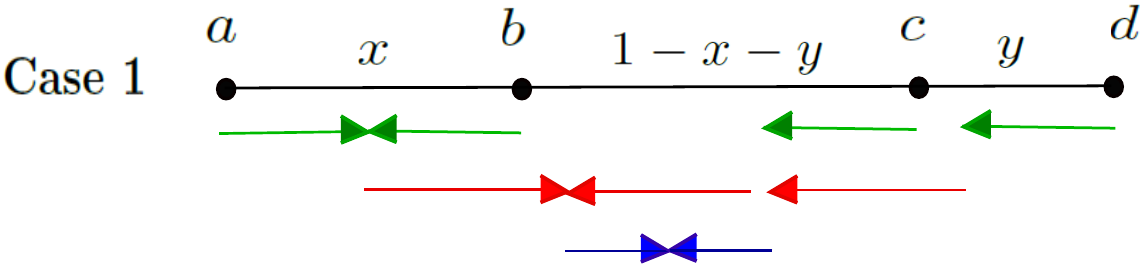}
\end{center}
\caption{Case 1 of Table \ref{tbl:cases}.}
\label{fig:case1}
\end{figure}
\begin{itemize}
\item[(1)] 
{\bf The pairs of non-faulty robots with optimal and non-optimal rendezvous time (columns 2 and 3 of Table \ref{tbl:cases}):} If $ac$, $ab$ or $ad$ are non-faulty then the rendezvous of Case 1 is optimal. This is due to the fact that $a$ moves in the right direction and $c$, $b$, and $d$  move in the left direction with their full speed. Therefore, if $ac$, $ab$ or $ad$ are non-faulty then the competitive ratio is one. For the cases when one of the pairs $bc$, $bd$, and $cd$ are non-faulty the rendezvous of Case 1 is not optimal.
\item[(2)] 
{\bf The distance moved (column 4 of Table \ref{tbl:cases}):} 
\begin{itemize}
\item
Rendezvous of $a$ and $b$: The distance between $a$ and $b$ is $x$. So $a$ and $b$ move towards each other for $\frac{x}{2}$. Also, $c$ and $d$ move to the left for $\frac x2$.
\item
Rendezvous of $ab$ and $c$: The the distance between $ab$ and $c$ is $1-x-y$, and so $ab$ and $c$ move $\frac{1-x-y}{2}$ towards each other to meet. Also $d$ moves $\frac{1-x-y}{2}$ to the left.
\item
Rendezvous of $abc$ and $d$: the distance between $abc$ and $d$ is $y$. So to finish the rendezvous $abc$ moves $\frac{y}{2}$ to the right and $d$ moves $\frac{y}{2}$ to the left. 
\end{itemize}
\item[(3)]
{\bf The competitive ratio (column 5 of Table \ref{tbl:cases}):} let one of the pairs $bc$, $bd$, and $cd$ be non-faulty.
\begin{itemize}
\item
$bc$ is non-faulty: then the rendezvous of Case 1 is complete when $ab$ and $c$ meet. Therefore the robots $b$ and $c$ move $\frac{x}{2}+\frac{1-x-y}{2}$ to rendezvous while in the optimal algorithm $b$ and $c$ move $\frac{1-x-y}{2}$ to rendezvous. This gives a competitive ratio of $\frac{1-y}{1-x-y}$. 
\item
$bd$ or $cd$ is non-faulty: then rendezvous of Case 1 occurs when $abc$ meet $d$ at the end. Therefore $b$ and $d$, or $c$ and $d$ move $\frac 12$ to rendezvous. If $bd$ is non-faulty then the optimal rendezvous time is $\frac{1-x}{2}$, which gives a competitive ratio of $\frac{1}{1-x}$. If $cd$ is non-faulty then the optimal rendezvous time is $\frac{y}{2}$, and thus the competitive ratio is $\frac{1}{y}$ in this case. 
\item
{\bf Competitive ratio :} Since $y\leq 1-x$ then $\frac{1}{1-x}\leq\frac{1}{y}$. So $f_1(x,y)=\max\{\frac{1-y}{1-x-y}, \frac{1}{y}\}$.
\end{itemize}
\end{itemize}

\vspace{.2cm}
\noindent{\bf Case 2 Rendezvous:} As depicted in Figure \ref{fig:case2}, robots $b$ and $c$ meet first, then $bc$ meets $a$ and at the end $abc$ and $d$ rendezvous. 
\begin{figure}[ht]
\begin{center}
\includegraphics[width=8cm]{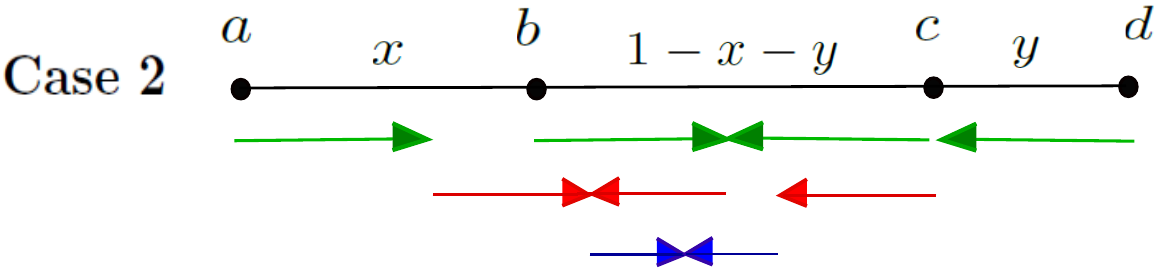}
\end{center}
\caption{Case 2 of Table \ref{tbl:cases}.}
\label{fig:case2}
\end{figure}

\begin{itemize}
\item[(1)]
{\bf The pairs of non-faulty robots with optimal and non-optimal rendezvous (columns 2 and 3 of Table~\ref{tbl:cases}):} If $bc$, $ac$ or $ad$ are non-faulty then the rendezvous of Case 2 is optimal. Therefore, if $bc$, $ac$ or $ad$ are non-faulty then the competitive ratio is one. For the cases when one of the pairs $ab$, $bd$, and $cd$ are non-faulty the rendezvous of Case 2 is not optimal. 
\item[(2)] 
{\bf The distance moved (column 4 of Table \ref{tbl:cases}):} 
\begin{itemize}
\item Rendezvous of $b$ and $c$: The distance between $b$ and $c$ is $1-x-y$. So $a$ and $b$ move to the right and $c$ and $d$ move to the left by a distance of $\frac{1-x-y}{2}$.
\item Rendezvous of $bc$ and $a$: The distance between $bc$ and $a$ is $x$. So $bc$ and $a$ move towards each other for $\frac{x}{2}$ to rendezvous, and $d$ moves $\frac x2$ to the left.
\item Rendezvous of $abc$ and $d$: The distance between $abc$ and $d$ is $y$. So  $abc$ and $d$ move towards each other for $\frac y2$.
\end{itemize}
\item[(3)]
{\bf The competitive ratio (column 5 of Table \ref{tbl:cases}):} let one of the pairs $ab$, $bd$, and $cd$ be non-faulty. 
\begin{itemize}
\item {\bf $ab$ is non-faulty:} then the rendezvous of Case 2 is complete when $a$ and $bc$ meet. Therefore the robots $a$ and $b$ move $\frac{1-x-y}{2}+\frac{x}{2}$ to rendezvous while in the optimal algorithm $a$ and $c$ move $\frac{x}{2}$ to rendezvous. This gives a competitive ratio of $\frac{1-y}{x}$.
\item {\bf $bd$ or $cd$ is non-faulty:} then the rendezvous of Case 2 is complete when $abc$ and $d$ meet at the end. Therefore the robots moved $\frac 12$. If $bd$ is non-faulty then the optimal rendezvous time is $\frac{1-x}{2}$, and thus the competitive ratio is $\frac{1}{1-x}$. If $cd$ non-faulty then the optimal rendezvous time is $\frac y2$, and thus the competitive ratio $\frac 1y$. Since $y\leq 1-x$ then $\frac{1}{1-x}\leq\frac 1y$.
\item {\bf Competitive ratio:} $f_2(x,y)=\max\{\frac{1-y}{x},\frac{1}{y}\}.$
\end{itemize}
\end{itemize}

\vspace{.2cm}
\noindent{\bf Case 3 Rendezvous:}  As depicted in Figure \ref{fig:case3} robots $b$ and $c$ meet first, then they split, $c$ moves to the right to meet $d$ and $b$ moves to the left to meet $a$. In the end $ab$ and $cd$ rendezvous. 
\begin{figure}[ht]
\begin{center}
\includegraphics[width=8cm]{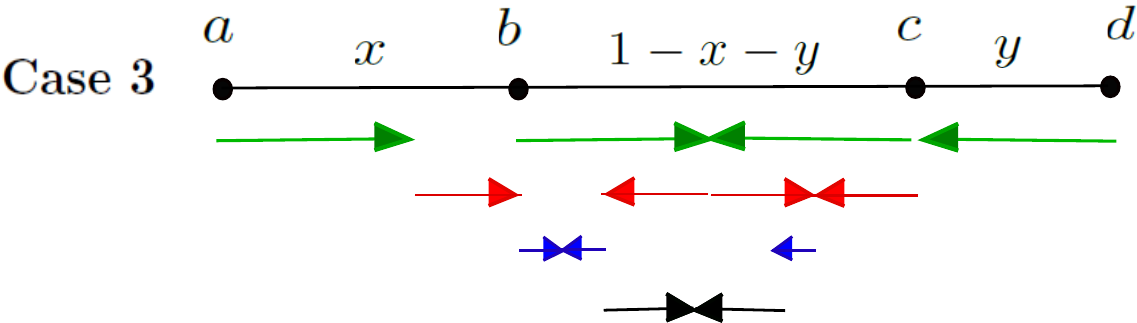}
\end{center}
\caption{Case 3 of Table \ref{tbl:cases}.}
\label{fig:case3}
\end{figure}
\begin{itemize}
\item[(1)]
{\bf The pairs of non-faulty robots with optimal and non-optimal rendezvous (columns 2 and 3 of Table \ref{tbl:cases}):}  It is easy to see that if $bc$ or $ad$ are non-faulty then the rendezvous of Case 3 is optimal. Therefore, if $bc$ or $ad$ are non-faulty then the competitive ratio is one. If one of the pairs $ab$, $ac$, $bc$, and $cd$ are non-faulty then their rendezvous is not optimal in Case 3. 
\item[(2)]
{\bf The distance moved (column 4 of Table \ref{tbl:cases}):} 
\begin{itemize}
\item 
Meeting of $b$ and $c$: The distance between $b$ and $c$ is $1-x-y$. So $b$ and $c$ move $\frac{1-x-y}{2}$ towards each other. Also $a$ move to the right and $d$ move to the left by a distance of $\frac{1-x-y}{2}$. 
\item
Rendezvous of $c$ and $d$: The distance between $c$ and $d$ is $y$. So $c$ and $d$ move towards each other for $\frac{y}{2}$ to meet. Also $a$ and $b$ move towards each other for $\frac y2$. Since the distance between $a$ and $b$ is $x$ and $y\leq x$ then $a$ and $b$ do not meet at the time of rendezvous of $cd$.
\item
Rendezvous of $a$ and $b$: When the rendezvous $ad$ occurs the distance between $a$ and $b$ is $x-y$. Therefore $a$ and $b$ continue moving towards each other for $\frac{x-y}{2}$ while $cd$ moves $\frac{x-y}{2}$ to the left to meet $ab$.
\item
Rendezvous of $ab$ and $cd$: When $a$ and $b$ meet the distance between $ab$ and $cd$ is $1-(1-x-y)-x$. So for the final rendezvous $ab$ and $cd$ move $\frac{y}{2}$ towards each other.  
\end{itemize}
\item[(3)]
{\bf The competitive ratio (column 5 of Table \ref{tbl:cases}):} let one of the pairs $ac$, $bd$, $ab$ and $cd$ be non-faulty. 
\begin{itemize}
\item
$ac,$ or $bd$ is non-faulty: then the rendezvous of Case 3 is complete when $ab$ and $cd$ meet at the end. Therefore the robots $a$ and $c$, or $b$ and $d$ move $\frac{1}{2}$ to rendezvous. If $ac$ is non-faulty then the optimal rendezvous time of $a$ and $c$ is $\frac{1-y}{2}$, and thus the competitive ratio is $\frac{1}{1-y}$. If $bd$ is non-faulty then the optimal rendezvous time of $b$ and $d$ is $\frac{1-x}{2}$, and thus the competitive ratio is $\frac{1}{1-x}$. 
\item
$ab$ is non-faulty: then the rendezvous is complete when $b$ and $a$ meet after the meeting of $b$ and $c$. So $a$ and $b$ move $\frac{1-x-y}{2}+\frac{x}{2}$ to rendezvous while an optimal rendezvous requires $\frac{x}{2}$. So the competitive ratio is $\frac{1-y}{x}$.
\item
$cd$ is non-faulty then rendezvous of Case 3 completes when $c$ and $d$ meet after the meeting of $b$ and $c$. So $c$ and $d$ move $\frac{1-x-y}{2}+\frac{y}{2}$ to rendezvous while an optimal rendezvous requires $\frac{y}{2}$. So the competitive ratio is $\frac{1-x}{y}$.   
\item
{\bf Competitive ratio:}
Since $y\leq x$ then $\frac{1}{1-y}\leq \frac{1}{1-x}$ and $\frac{1-y}{x}\leq \frac{1-x}{y}$. So $f_3(x,y)=\max\{\frac{1-x}{y}, \frac{1}{1-x}\}$.  
\end{itemize}
\end{itemize}     

\vspace{.2cm}
\noindent{\bf Case 4 Rendezvous:} As depicted in Figure \ref{fig:case4}, robots $b$ and $c$ meet first, then $bc$ meets $d$ and at the end $bcd$ and $a$ rendezvous. 
\begin{figure}[ht]
\begin{center}
\includegraphics[width=8cm]{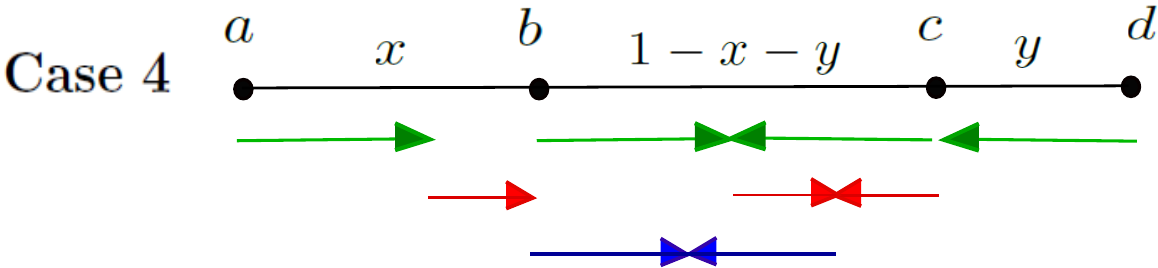}
\end{center}
\caption{Case 4 of Table \ref{tbl:cases}.}
\label{fig:case4}
\end{figure}

\begin{itemize}
\item[(1)]
{\bf The pairs of non-faulty robots with optimal and non-optimal rendezvous (columns 2 and 3 of Table~\ref{tbl:cases}):} If $bc$, $bd$ or $ad$ are non-faulty then the rendezvous of Case 4 is optimal. Therefore, if $bc$, $bd$ or $ad$ are non-faulty then the competitive ratio is one. For the cases when one of the pairs $ab$, $ac$, and $cd$ are non-faulty the rendezvous of Case 4 is not optimal. 
\item[(2)] 
{\bf The distance moved (column 4 of Table \ref{tbl:cases}):} 
\begin{itemize}
\item Rendezvous of $b$ and $c$: The distance between $b$ and $c$ is $1-x-y$. So $a$ and $b$ move to the right and $c$ and $d$ move to the left by a distance of $\frac{1-x-y}{2}$.
\item Rendezvous of $bc$ and $d$: The distance between $bc$ and $d$ is $y$. So $bc$ and $d$ move towards each other for $\frac{y}{2}$ to rendezvous, and $a$ moves $\frac y2$ to the right.
\item Rendezvous of $bcd$ and $a$: The distance between $bcd$ and $a$ is $x$. So  $bcd$ and $a$ move towards each other for $\frac x2$.
\end{itemize}
\item[(3)]
{\bf The competitive ratio (column 5 of Table \ref{tbl:cases}):} let one of the pairs $ab$, $ac$, and $cd$ be non-faulty. 
\begin{itemize}
\item {\bf $ab$ or $ac$ is non-faulty:} then the rendezvous of Case 4 is complete when $a$ and $bcd$ meet at the end. Therefore the robots $a$ and $c$ or $a$ and $c$ move $\frac 12$ to rendezvous. If $ab$ is non-faulty then the optimal rendezvous time is $\frac{x}{2}$, and thus the competitive ratio is $\frac{1}{x}$. If $ac$ non-faulty then the optimal rendezvous time is $\frac {1-y}{2}$, and thus the competitive ratio $\frac 1x$. Since $x\leq 1-y$ then $\frac{1}{1-y}\leq\frac 1x$.
\item {\bf $cd$ is non-faulty:} then the rendezvous of Case 4 is complete when $cb$ and $d$ meet. Therefore the robots $c$ and $d$ move $\frac{1-x-y}{2}+\frac{y}{2}$ to rendezvous while in the optimal algorithm $c$ and $d$ move $\frac{y}{2}$ to rendezvous. This gives a competitive ratio of $\frac{1-x}{y}$.
\item {\bf Competitive ratio:} $f_4(x,y)=\max\{\frac{1-x}{y},\frac{1}{x}\}.$
\end{itemize}
\end{itemize}

\vspace{.2cm}
\noindent{\bf Case 5 Rendezvous:} As depicted in Figure \ref{fig:case5}, robots $c$ and $d$ meet first, then $a$ and $b$ meet, and finally $ab$ meets $cd$. 
\begin{figure}[ht]
\begin{center}
\includegraphics[width=8cm]{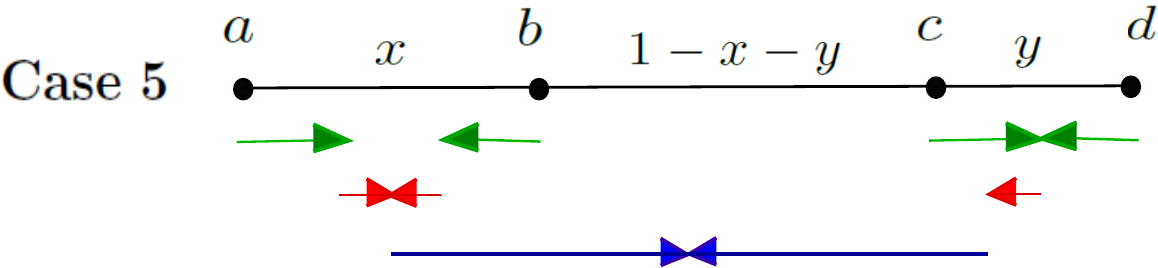}
\end{center}
\caption{Case 5 of Table \ref{tbl:cases}.}
\label{fig:case5}
\end{figure}

\begin{itemize}
\item[(1)]
{\bf The pairs of non-faulty robots with optimal and non-optimal rendezvous (columns 2 and 3 of Table~\ref{tbl:cases}):} If $ab$, $cd$ or $ad$ are non-faulty then the rendezvous of Case 5 is optimal. Therefore, if $ab$, $cd$ or $ad$ are non-faulty then the competitive ratio is one. For the cases when one of the pairs $ac$, $bc$, and $bd$ are non-faulty the rendezvous of Case 5 is not optimal. 
\item[(2)] 
{\bf The distance moved (column 4 of Table \ref{tbl:cases}):} 
\begin{itemize}
\item Rendezvous of $c$ and $d$: The distance between $c$ and $d$ is $y$. So $c$ and $d$ move towards each other for $\frac y2$. Also, $a$ and $b$ move towards each other for $\frac{y}{2}$. Since the distance between $a$ and $b$ is $x$ and $x\leq y$ they do not rendezvous by time $\frac y2$.
\item Rendezvous of $a$ and $b$: The distance between $a$ and $b$ is $x-y$ at the time of rendezvous of $cd$. So $a$ and $b$ move towards each other for another $\frac{x-y}{2}$ to rendezvous. Also, $cd$ moves $\frac{x-y}{2}$ to the left to rendezvous with $ab$.
\item Rendezvous of $ab$ and $cd$: The distance between $ab$ and $cd$ is $1-x$ at the time of rendezvous of $ab$. So $ab$ and $cd$ move towards each other for $\frac{1-x}{2}$ to rendezvous.
\end{itemize}
\item[(3)]
{\bf The competitive ratio (column 5 of Table \ref{tbl:cases}):} let one of the pairs $ac$, $bc$, and $bd$ be non-faulty. 
\begin{itemize}
\item {\bf $ac$, $bc$, or $bd$ is non-faulty:} then the rendezvous of Case 5 is complete when $ab$ and $cd$ meet at the end. Therefore for all the cases the robots move $\frac 12$ to rendezvous. If $ac$ is non-faulty then the optimal rendezvous time is $\frac{1-y}{2}$, and thus the competitive ratio is $\frac{1}{1-y}$. If $bc$ non-faulty then the optimal rendezvous time is $\frac {1-x-y}{2}$, and thus the competitive ratio $\frac{1}{1-x-y}$. If $bd$ is non-faulty the the optimal rendezvous time is $\frac{1-x}{2}$, and thus the competitive ratio is $\frac{1}{1-x}$. Since $1-x-y\leq \max\{1-x, 1-y\}$ then $\max\{\frac{1}{1-x},\frac{1}{1-y}\}\leq\frac {1}{1-x-y}$.
\item {\bf Competitive ratio:} $f_5(x,y)=\frac{1}{1-x-y}.$
\end{itemize}
\end{itemize}

\vspace{.2cm}
\noindent{\bf Case 6 Rendezvous:} As depicted in Figure \ref{fig:case6}, robots $c$ and $d$ meet first, then $cd$ meets $b$ and at the end $bcd$ and $a$ rendezvous. 
\begin{figure}[ht]
\begin{center}
\includegraphics[width=8cm]{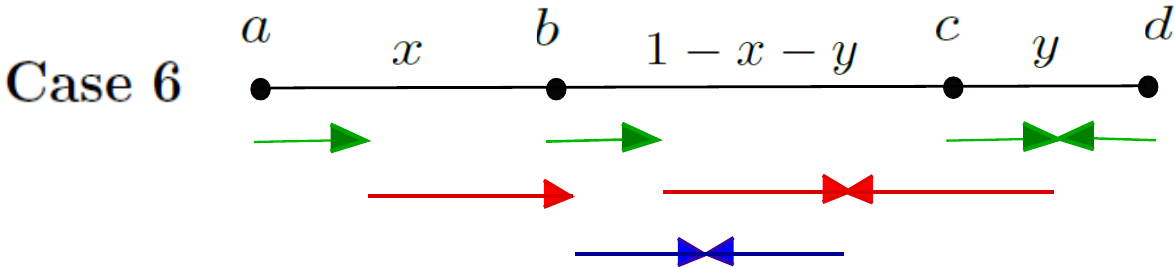}
\end{center}
\caption{Case 6 of Table \ref{tbl:cases}.}
\label{fig:case6}
\end{figure}
\begin{itemize}
\item[(1)]
{\bf The pairs of non-faulty robots with optimal and non-optimal rendezvous (columns 2 and 3 of Table~\ref{tbl:cases}):} If $cd$, $bd$ or $ad$ are non-faulty then the rendezvous of Case 4 is optimal. Therefore, if $cd$, $bd$ or $ad$ are non-faulty then the competitive ratio is one. For the cases when one of the pairs $ab$, $ac$, and $bc$ are non-faulty the rendezvous of Case 6 is not optimal. 
\item[(2)] 
{\bf The distance moved (column 4 of Table \ref{tbl:cases}):} 
\begin{itemize}
\item Rendezvous of $c$ and $d$: The distance between $c$ and $d$ is $y$. So $c$ and $d$ move towards each other for $\frac y2$. Also, $a$ and $b$ move to the right by a distance of $\frac{y}{2}$.
\item Rendezvous of $cd$ and $b$: The distance between $cd$ and $b$ is $1-x-y$. So $cd$ and $b$ move towards each other for $\frac{1-x-y}{2}$ to rendezvous, and $a$ moves $\frac{1-x-y}{2}$ to the right.
\item Rendezvous of $bcd$ and $a$: The distance between $bcd$ and $a$ is $x$. So  $bcd$ and $a$ move towards each other for $\frac x2$.
\end{itemize}
\item[(3)]
{\bf The competitive ratio (column 5 of Table \ref{tbl:cases}):} let one of the pairs $ab$, $ac$, and $bc$ be non-faulty. 
\begin{itemize}
\item {\bf $ab$ or $ac$ is non-faulty:} then the rendezvous of Case 6 is complete when $a$ and $bcd$ meet at the end. Therefore the robots $a$ and $b$ or $a$ and $c$ move $\frac 12$ to rendezvous. If $ab$ is non-faulty then the optimal rendezvous time is $\frac{x}{2}$, and thus the competitive ratio is $\frac{1}{x}$. If $ac$ non-faulty then the optimal rendezvous time is $\frac {1-y}{2}$, and thus the competitive ratio $\frac 1x$. Since $x\leq 1-y$ then $\frac{1}{1-y}\leq\frac 1x$.
\item {\bf $bc$ is non-faulty:} then the rendezvous of Case 6 is complete when $cd$ and $b$ meet. Therefore the robots $b$ and $c$ move $\frac y2+\frac{1-x-y}{2}$ to rendezvous while in the optimal algorithm $b$ and $c$ move $\frac{1-x-y}{2}$ to rendezvous. This gives a competitive ratio of $\frac{1-x}{1-x-y}$.
\item {\bf Competitive ratio:} $f_6(x,y)=\max\{\frac{1-x}{1-x-y},\frac{1}{x}\}.$
\end{itemize}
\end{itemize}
Now that we have all possible rendezvous algorithms and their corresponding competitive ratios we can prove that $1+\phi$ is a lower bound of the competitive ratio of any rendezvous algorithm for four robots two of which are faulty.
\begin{lemma}
The competitive ratio of any rendezvous algorithm for four robots two of which are faulty is bounded below by $1+\phi$.
\end{lemma}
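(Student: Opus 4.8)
The plan is to leverage the exhaustive case analysis already carried out in this section. By Lemma~\ref{lem:Opt-move-rules} every rendezvous algorithm may be taken to obey rules (1) and (2), so on any fixed configuration $(x,y)$ its trajectories are fixed functions of time and the resulting sequence of pairwise coincidences is exactly one of the six orders enumerated in the tree of Figure~\ref{fig:cases-tree}. Table~\ref{tbl:cases} records, for each such order $i$, the worst-case competitive ratio $f_i(x,y)$, i.e.\ the maximum over the adversary's choice of which two robots are the non-faulty ones. Hence, on a fixed configuration, the competitive ratio achieved by \emph{any} algorithm is at least $\min_{1\le i\le 6} f_i(x,y)$, and to prove the lower bound it suffices to exhibit a single adversarial configuration on which this pointwise minimum is at least $1+\phi$ (reporting all positions truthfully and letting the adversary merely designate two robots as faulty; permitting lies could only help the adversary).

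First I would fix the configuration singled out in the statement, namely $x=\phi y$ together with $1-x-y=x$. Solving these gives $y=\frac{1}{1+2\phi}$, $x=\frac{\phi}{1+2\phi}$, hence $1-x-y=x$ and $1-x=\frac{1+\phi}{1+2\phi}$. The point of this choice is that it simultaneously balances the competing quantities appearing in the $\max$-expressions of cases $3$--$6$, pinning their value to $1+\phi$ so that the minimum across all cases cannot be pushed any lower.

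Then I would evaluate the six functions at this point using the golden-ratio identities $\phi^2=\phi+1$ and $1/\phi=\phi-1$. A short computation gives $\frac{1}{y}=1+2\phi$, $\frac{1-y}{x}=2$, so $f_1=f_2=1+2\phi>1+\phi$; and $\frac{1-x}{y}=1+\phi$, $\frac{1}{x}=\frac{1}{1-x-y}=1+\phi$, while $\frac{1-x}{1-x-y}=\phi<1+\phi$ and $\frac{1}{1-x}=\phi<1+\phi$. Substituting these into the $\max$ in each row of Table~\ref{tbl:cases} yields $f_3=f_4=f_5=f_6=1+\phi$. Therefore $\min_{1\le i\le 6} f_i(x,y)=1+\phi$ at this configuration, and since any algorithm must realize one of the six cases on it, its competitive ratio on this instance---and hence its worst-case competitive ratio---is at least $1+\phi$.

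The arithmetic verification of the six values is routine; the only real content is the choice of the extremal configuration, which is precisely what forces cases $3$--$6$ to tie at $1+\phi$ while keeping cases $1$--$2$ strictly larger. The main thing to be careful about is the logical direction of the argument: I am not optimizing over algorithms but rather observing that, whichever of the six behaviors an algorithm exhibits on this one fixed instance, the adversary can select a non-faulty pair realizing the tabulated ratio $f_i\ge 1+\phi$. I would also note that this configuration is in fact the $\arg\max$ of $\min_i f_i$, which is what makes the bound tight and dovetails with the matching upper bound of Theorem~\ref{thm:4-robots-UB}.
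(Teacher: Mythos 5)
Your proposal is correct and takes essentially the same approach as the paper: it evaluates the six case functions $f_i$ at the extremal configuration (your point, determined by $x=\phi y$ and $1-x-y=x$, coincides exactly with the paper's $\left(\frac{1}{1+\phi},\frac{1}{\phi(1+\phi)}\right)$) and concludes that the minimum over the six cases is $1+\phi$. Your arithmetic is in fact slightly more careful than the paper's, which asserts $f_i=1+\phi$ for all $1\leq i\leq 6$, whereas actually $f_1=f_2=1+2\phi$; the discrepancy is harmless since the lower bound only needs $\min_{1\leq i\leq 6} f_i(x,y)\geq 1+\phi$.
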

\begin{proof}
Consider the point $(x,y)=(\frac{1}{1+\phi}, \frac{1}{\phi(1+\phi)})$. Inserting the value of $x$ and $y$ into functions $f_i$, $1\leq i\leq 6$ imply that 
$$f_i \left(\frac{1}{1+\phi}, \frac{1}{\phi(1+\phi)} \right)=1+\phi.$$
This proves that for any rendezvous algorithm the competitive ratio is at least $1+\phi$.
\qed
\end{proof} 

This completes the proof of Theorem \ref{thm:4-robots-thm}.

\section{Proof of Theorem~\ref{thm:4-robots-UB}}

In what follows we aim to prove that $1+\phi$ is also an upper bound. We present a rendezvous algorithm whose competitive ratio is at most $1+\phi$. Therefore the algorithm to be introduced later is an optimal  algorithm.

The function  to be optimized  represents the rendezvous time of any algorithm and the resulting optimization problem is defined as follows:
\begin{equation}
\label{opt:eq}
\begin{array}{ll}
& \max_{x, y} \min_{1 \leq i \leq 6} \left\{ f_i (x,y) \right\} \\
\mbox{\bf Subject to:}~~
& 0 \leq y \leq x \leq 1 \mbox{ and } x+y \leq 1 .
\end{array}
\end{equation}

Now we analyze the rendezvous time of all possible algorithms concerning four robots two of which are faulty. 
It is easy to prove that $f_6$ is dominated by $f_1$ and $f_4$ is dominated by $f_2$ in the sense that $f_6 (x,y) \leq f_1 (x,y)$ and $f_4 (x,y) \leq f_2 (x,y)$, for all $x,y$. 
\begin{lemma}
\label{lm1}
$f_6 (x,y) \leq f_1 (x, y)$, for all $x, y \in [0,1]$ such that $y \leq \min \{ x , 1-x\}$.
\end{lemma}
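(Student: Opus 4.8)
The plan is to prove the inequality by a direct term-by-term comparison of the two maxima, exploiting only the hypothesis $y \le x$ (together with $y \le 1-x$, which guarantees the common denominator is nonnegative). Recall from Table~\ref{tbl:cases} that
$$
f_1(x,y) = \max\left\{ \frac{1-y}{1-x-y},\ \frac{1}{y} \right\}, \qquad
f_6(x,y) = \max\left\{ \frac{1}{x},\ \frac{1-x}{1-x-y} \right\}.
$$
The key structural observation is that each of the two arguments of the maximum defining $f_6$ is dominated by a corresponding argument of the maximum defining $f_1$, so the conclusion follows from the monotonicity of $\max$: if $a \le c$ and $b \le d$ then $\max\{a,b\} \le \max\{c,d\}$.

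First I would establish the two pointwise inequalities. For the ``$1/x$'' term, since $0 < y \le x$ we immediately get $\frac{1}{x} \le \frac{1}{y}$, so the first argument of $f_6$ is bounded by the second argument of $f_1$. For the ``$\frac{1-x}{1-x-y}$'' term, the hypothesis $y \le x$ gives $1-x \le 1-y$; dividing both sides by the positive quantity $1-x-y$ (positive because $y \le 1-x$ gives $x+y \le 1$, and we treat the degenerate case $x+y=1$ separately) yields $\frac{1-x}{1-x-y} \le \frac{1-y}{1-x-y}$, so the second argument of $f_6$ is bounded by the first argument of $f_1$.

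Combining these, I would write
$$
f_6(x,y) = \max\left\{ \frac{1}{x},\ \frac{1-x}{1-x-y} \right\}
\le \max\left\{ \frac{1}{y},\ \frac{1-y}{1-x-y} \right\} = f_1(x,y),
$$
which is exactly the claim. For completeness I would dispose of the boundary case $x+y=1$ (equivalently $y = 1-x$), where the denominator $1-x-y$ vanishes: here both $\frac{1-x}{1-x-y}$ and $\frac{1-y}{1-x-y}$ diverge to $+\infty$, so the inequality holds trivially (or one argues by a limit as $x+y \to 1^-$).

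I do not expect any genuine obstacle in this argument; the only point requiring care is bookkeeping the sign of the common denominator $1-x-y$ and verifying that the monotonicity of $\max$ is applied to matching pairs of terms (the $1/x$ term of $f_6$ against the $1/y$ term of $f_1$, and the $\frac{1-x}{1-x-y}$ term of $f_6$ against the $\frac{1-y}{1-x-y}$ term of $f_1$). Everything else reduces to the single inequality $y \le x$, which is part of the standing assumption.
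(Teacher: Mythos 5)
Your proof is correct and is exactly the argument the paper intends: the paper's own proof simply states the inequality is ``immediate since $y \leq x$,'' which unpacks to precisely your term-by-term comparison ($\frac{1}{x} \leq \frac{1}{y}$ and $\frac{1-x}{1-x-y} \leq \frac{1-y}{1-x-y}$, then monotonicity of $\max$). Your treatment of the degenerate denominator case is a minor bit of extra care the paper omits, but the approach is the same.
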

\begin{proof} (Lemma~\ref{lm1})
Recall from the definitions that 
$$
f_1 (x, y) = \max \left\{ \frac{1-y}{1-x-y}, \frac{1}{y} \right\}
\mbox{ and }
f_6 (x, y) =\max \left\{ \frac{1}{x}, \frac{1-x}{1-x-y} \right\}.
$$ 
Hence, the inequality $f_6 (x,y) \leq f_1 (x, y)$ is immediate since $y \leq x$. This proves Lemma~\ref{lm1}.
\qed
\end{proof}

\begin{lemma}
\label{lm2}
$f_4 (x,y) \leq f_2 (x, y)$, for all $x, y \in [0,1]$ such that $y \leq \min \{ x , 1-x\}$.
\end{lemma}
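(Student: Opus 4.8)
The plan is to mirror the term-by-term argument used for Lemma~\ref{lm1}, comparing the two maxima directly. Recall that $f_4(x,y) = \max\{\frac{1}{x}, \frac{1-x}{y}\}$ while $f_2(x,y) = \max\{\frac{1-y}{x}, \frac{1}{y}\}$, so it suffices to bound each of the two arguments of the maximum defining $f_4$ by an argument of the maximum defining $f_2$.

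First I would observe that the naive pairing of $\frac{1}{x}$ against $\frac{1-y}{x}$ does \emph{not} work, since $1-y \le 1$ gives $\frac{1-y}{x} \le \frac{1}{x}$, which is the wrong direction. The correct move is to route both terms of $f_4$ through the common term $\frac{1}{y}$ appearing in $f_2$. Indeed, the hypothesis $y \le \min\{x, 1-x\}$ supplies $y \le x$, whence $\frac{1}{x} \le \frac{1}{y}$; and since $x \ge 0$ forces $1-x \le 1$, we also obtain $\frac{1-x}{y} \le \frac{1}{y}$.

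Combining these two inequalities, both arguments of the maximum defining $f_4$ are at most $\frac{1}{y}$, so $f_4(x,y) \le \frac{1}{y}$. Since $\frac{1}{y}$ is itself one of the two terms in the maximum defining $f_2$, we have $f_2(x,y) \ge \frac{1}{y} \ge f_4(x,y)$, which is the desired inequality. The only mild subtlety, and hence the main ``obstacle'', is recognizing that the domination is realized not by matching the like denominators but by the shared $\frac{1}{y}$ term; once that is seen, the proof is a two-line consequence of $y \le x$ and $x \ge 0$, just as in Lemma~\ref{lm1}.
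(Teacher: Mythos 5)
Your proof is correct and matches the paper's own argument: both components $\frac{1}{x}$ and $\frac{1-x}{y}$ of $f_4$ are bounded by the single term $\frac{1}{y}$ of $f_2$, using $y \leq x$ and $1-x \leq 1$. Your remark that the second inequality rests on $x \geq 0$ rather than $y \leq x$ is a small clarification of the paper's phrasing, but the approach is the same.
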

\begin{proof} (Lemma~\ref{lm2})
Recall from the definitions that 
$$
f_2 (x, y) = \max \left\{ \frac{1-y}{x}, \frac{1}{y} \right\}
\mbox{ and }
f_4 (x, y) =\max \left\{ \frac{1}{x}, \frac{1-x}{y} \right\}.
$$ Since $y \leq x$ we have $\frac 1x \leq \frac 1y$ and $\frac{1-x}{y}  \leq \frac{1}{y}$. Hence, the inequality $f_4 (x,y) \leq f_2 (x, y)$ is immediate since $y \leq x$. This proves (Lemma~\ref{lm2}). \qed
\end{proof}

Therefore the Optimization Problem~\eqref{opt:eq} considered above is equivalent to the following:
\begin{equation}
\label{opt1:eq}
\begin{array}{ll}
& \max_{x, y} \min_{3 \leq i \leq 6} \left\{ f_i (x,y) \right\} \\
\mbox{\bf Subject to:}~~
& 0 \leq y \leq x \leq 1 \mbox{ and } x+y \leq 1 .
\end{array}
\end{equation}
Further, we may assume without loss of generality that $0 \leq x < 1$. Next we detect the regions of $\mathbb{R}^2$ for which  the rendezvous algorithms of Cases 3-6 have competitive ratio of at most $1+\phi$. Indeed Lemmas \ref{lm3}-\ref{lm6} obtain the region $R(f_i)$, $3\leq i\leq 6$, in which the competitive ratio corresponding to Case $i$, $f_i(x,y)$, is at most $1+\phi$. 

\begin{lemma}
\label{lm3}
Let $R(f_3)=\{(x,y)| x\leq \frac{\phi}{1+\phi}, y\geq \frac{1}{1+\phi}-\frac{x}{1+\phi}\}$.
Then $f_3(x,y)\leq 1+\phi$ if and only if $(x,y)$ is a point in $R(f_3)$. 
\end{lemma}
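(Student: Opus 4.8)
The plan is to exploit the fact that $f_3$ is a maximum of two elementary rational functions, so that its sublevel set $\{f_3 \leq 1+\phi\}$ is exactly the intersection of the two individual sublevel sets. Writing $f_3(x,y) = \max\{A,B\}$ with $A = \frac{1}{1-x}$ and $B = \frac{1-x}{y}$, the inequality $f_3(x,y) \leq 1+\phi$ holds if and only if $A \leq 1+\phi$ \emph{and} $B \leq 1+\phi$. I would therefore solve these two scalar inequalities separately and check that together they cut out precisely $R(f_3)$.

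For the first, since the standing constraints give $0 \leq x < 1$ and hence $1-x > 0$, I would cross-multiply $\frac{1}{1-x} \leq 1+\phi$ to obtain $1 \leq (1+\phi)(1-x)$, i.e. $1-x \geq \frac{1}{1+\phi}$, which rearranges to $x \leq 1 - \frac{1}{1+\phi} = \frac{\phi}{1+\phi}$. This is exactly the first defining constraint of $R(f_3)$. For the second, since $y > 0$ in the nontrivial case, cross-multiplying $\frac{1-x}{y} \leq 1+\phi$ gives $1-x \leq (1+\phi)y$, i.e. $y \geq \frac{1-x}{1+\phi} = \frac{1}{1+\phi} - \frac{x}{1+\phi}$, which is exactly the second defining constraint.

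Because every step above is an equivalence (cross-multiplication by a strictly positive quantity preserves the inequality in both directions), conjoining the two characterizations yields $f_3(x,y) \leq 1+\phi$ precisely when $(x,y) \in R(f_3)$, proving both directions of the lemma simultaneously. The only point requiring care is the positivity of the two denominators $1-x$ and $y$, which justifies the cross-multiplications; this is guaranteed by the feasibility constraints $0 \leq y \leq x < 1$, with the degenerate case $y = 0$ giving $f_3 = \infty > 1+\phi$ and falling outside $R(f_3)$, consistent with the claim. I do not expect any genuine obstacle here: the argument is a direct, self-contained inequality computation, and the analogous lemmas for $f_4$, $f_5$, $f_6$ would follow by the same recipe.
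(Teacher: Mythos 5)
Your proposal is correct and follows essentially the same route as the paper: decompose $f_3=\max\bigl\{\frac{1}{1-x},\frac{1-x}{y}\bigr\}$, note that the sublevel set of a maximum is the intersection of the two individual sublevel sets, and solve each inequality by cross-multiplication to recover the two defining constraints of $R(f_3)$. The only difference is that you make explicit the positivity of the denominators $1-x$ and $y$ (and the degenerate case $y=0$), which the paper leaves implicit under ``elementary calculations.''
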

\begin{proof} (Lemma~\ref{lm3})
Recall that $f_3(x,y)=\max\{\frac{1}{x-1},\frac{1-x}{y}\}$. Therefore $f_3(x,y)\leq 1+\phi$ if and only if $\frac{1}{x-1}\leq 1+\phi$ and $\frac{1-x}{y}\leq 1+\phi$. 
Using elementary calculations we can show easily that
\begin{align*}
\frac{1}{1-x}  \leq 1+\phi
&\Leftrightarrow x   \leq  \frac{\phi}{1+\phi} \\
\frac{1-x}{y}   \leq 1+\phi
&\Leftrightarrow y\geq \frac{1}{1+\phi}-\frac{x}{1+\phi}\\
\end{align*}
This proves the lemma.
\qed
\end{proof}

%The function $\min \left\{ 1-2x  , \frac{(1-x)^2}{2-x}  \right\}$ is depicted in Figure~\ref{fig:maple}. 
%\begin{figure}[!htb]
%\begin{center}
%\includegraphics[width=6cm]{FIG/maple.pdf}
%\end{center}
%\caption{Plot of the function $\min \left\{ 1-2x  , \frac{(1-x)^2}{2-x}  \right\}$. When $y$ is below the plot of this curve then $f_5 (x,y) \leq \min \{ f_3 (x, y) , f_4 (x,y) ,f_6 (x,y) \}$ is valid.}
%\label{fig:maple}
%\end{figure}
%We observe that when $y$ lies in the quadrant below this curve then by Lemma~\ref{lm3} we have that $f_5 (x,y) \leq \min \{ f_3 (x, y) , f_4 (x,y) ,f_6 (x,y) \}$.

\begin{lemma}
\label{lm4}
Let $R(f_4)=\{(x,y)| x\geq \frac{1}{1+\phi}, y\geq \frac{1}{1+\phi}-\frac{x}{1+\phi}\}$.
Then $f_4(x,y)\leq 1+\phi$ if and only if $(x,y)$ is a point in $R(f_4)$. 
\end{lemma}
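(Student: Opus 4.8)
The plan is to mirror the proof of Lemma~\ref{lm3} almost verbatim, since $f_4$ is again the maximum of two simple rational expressions. Recall from Table~\ref{tbl:cases} that $f_4(x,y)=\max\left\{\frac{1}{x},\frac{1-x}{y}\right\}$. The key observation is that a maximum of two quantities is at most $1+\phi$ if and only if each of the two quantities is at most $1+\phi$; hence the single inequality $f_4(x,y)\le 1+\phi$ decouples into the two inequalities $\frac{1}{x}\le 1+\phi$ and $\frac{1-x}{y}\le 1+\phi$, which I would treat independently.

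First I would rearrange $\frac{1}{x}\le 1+\phi$. Working in the regime $0<x<1$ (guaranteed by the standing assumption $0\le x<1$ together with $x>0$, which is needed for the expression to be defined), multiplying through by the positive quantity $x$ gives $1\le (1+\phi)x$, i.e. $x\ge \frac{1}{1+\phi}$, which is precisely the first defining constraint of $R(f_4)$.

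Next I would rearrange $\frac{1-x}{y}\le 1+\phi$. Since $x<1$ the numerator $1-x$ is positive and, taking $y>0$, multiplying by $y$ preserves the inequality to yield $1-x\le (1+\phi)y$, i.e. $y\ge \frac{1-x}{1+\phi}=\frac{1}{1+\phi}-\frac{x}{1+\phi}$, which is exactly the second constraint of $R(f_4)$. Combining the two equivalences gives $f_4(x,y)\le 1+\phi$ if and only if $(x,y)\in R(f_4)$, completing the proof.

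The calculations are entirely elementary, so there is no genuine obstacle here; the only point requiring a little care is making sure the denominators $x$ and $y$ are strictly positive before clearing them, so that the directions of the inequalities are not reversed. One should also observe that the degenerate cases $x=0$ or $y=0$ fall outside $R(f_4)$ while simultaneously forcing $f_4$ to exceed $1+\phi$, so that the stated equivalence continues to hold on the boundary of the feasible region.
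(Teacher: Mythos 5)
Your proposal is correct and follows essentially the same route as the paper's own proof: decompose the condition $f_4(x,y)\leq 1+\phi$ into the two inequalities $\frac{1}{x}\leq 1+\phi$ and $\frac{1-x}{y}\leq 1+\phi$, and rearrange each into the defining constraints of $R(f_4)$. Your additional attention to the positivity of the denominators $x$ and $y$ is a minor refinement the paper leaves implicit, but it does not change the argument.
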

\begin{proof} (Lemma~\ref{lm4})
Recall that $f_4(x,y)=\max\{\frac{1}{x},\frac{1-x}{y}\}$. Therefore $f_4(x,y)\leq 1+\phi$ if and only if $\frac{1}{x}\leq 1+\phi$ and $\frac{1-x}{y}\leq 1+\phi$. 
Using elementary calculations we can show easily that
\begin{align*}
\frac{1}{x}  \leq 1+\phi
&\Leftrightarrow x   \geq  \frac{1}{1+\phi} \\
\frac{1-x}{y}   \leq 1+\phi
&\Leftrightarrow y\geq \frac{1}{1+\phi}-\frac{x}{1+\phi}\\
\end{align*}
This proves the lemma.
\qed
\end{proof}

\begin{lemma}
\label{lm5}
Let $R(f_5)=\{(x,y)| y\leq \frac{\phi}{1+\phi}-x\}$.
Then $f_5(x,y)\leq 1+\phi$ if and only if $(x,y)$ is a point in $R(f_5)$. 
\end{lemma}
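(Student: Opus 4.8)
The plan is to follow the same template as the proofs of Lemmas~\ref{lm3} and \ref{lm4}, but the situation here is even simpler because, unlike $f_3$ and $f_4$, the competitive ratio $f_5$ is a \emph{single} fraction rather than a maximum of two terms. First I would recall from Table~\ref{tbl:cases} that $f_5(x,y)=\frac{1}{1-x-y}$. The entire statement therefore reduces to a single elementary inequality $\frac{1}{1-x-y}\leq 1+\phi$, and the goal is to show this is equivalent to $y\leq \frac{\phi}{1+\phi}-x$.

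Next I would carry out the algebraic manipulation, taking care of the one subtlety that matters: the sign of the denominator $1-x-y$. Under the standing constraints $0\leq y\leq x$ and $x+y\leq 1$ we have $1-x-y\geq 0$, and the case $x+y=1$ can be dismissed immediately since then $f_5$ is unbounded and cannot be $\leq 1+\phi$ (consistently, it also violates the proposed region, as $\frac{\phi}{1+\phi}<1$). So I may assume $1-x-y>0$ and cross-multiply without reversing the inequality:
\begin{align*}
\frac{1}{1-x-y}\leq 1+\phi
&\Leftrightarrow 1\leq (1+\phi)(1-x-y)\\
&\Leftrightarrow 1-x-y\geq \frac{1}{1+\phi}\\
&\Leftrightarrow x+y\leq 1-\frac{1}{1+\phi}=\frac{\phi}{1+\phi}\\
&\Leftrightarrow y\leq \frac{\phi}{1+\phi}-x .
\end{align*}

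Finally I would observe that the last inequality is exactly the defining condition of $R(f_5)$, which establishes the claimed equivalence $f_5(x,y)\leq 1+\phi \iff (x,y)\in R(f_5)$ and completes the proof. I do not expect any real obstacle here: the only point requiring a moment of care is justifying that $1-x-y>0$ so that the cross-multiplication preserves the direction of the inequality, which is guaranteed by the feasibility constraints together with the observation that the boundary $x+y=1$ lies outside $R(f_5)$.
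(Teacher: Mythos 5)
Your proof is correct and follows essentially the same route as the paper's: both reduce the claim to the single algebraic equivalence $\frac{1}{1-x-y}\leq 1+\phi \Leftrightarrow y\leq \frac{\phi}{1+\phi}-x$ by elementary manipulation. Your explicit justification that $1-x-y>0$ (so the cross-multiplication preserves the inequality's direction) is a small point of extra rigor that the paper leaves implicit, but it does not change the argument.
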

\begin{proof} (Lemma~\ref{lm5})
Recall that $f_5(x,y)=\frac{1}{1-x-y}$. Therefore 
\begin{align*}
\frac{1}{1-x-y}  \leq 1+\phi
&\Leftrightarrow 0   \leq  \phi-(1+\phi)x-(1+\phi)y\\
&\Leftrightarrow y\leq \frac{\phi}{1+\phi}-x\\
\end{align*}
This proves the lemma.
\qed
\end{proof}

\begin{lemma}
\label{lm6}
Let $R(f_6)=\{(x,y)| x\geq \frac{1}{1+\phi}, y\leq \frac{\phi}{1+\phi}-\frac{\phi x}{1+\phi}\}$.
Then $f_6(x,y)\leq 1+\phi$ if and only if $(x,y)$ is a point in $R(f_6)$. 
\end{lemma}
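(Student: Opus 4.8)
The plan is to follow verbatim the pattern established in the proofs of Lemmas~\ref{lm3}--\ref{lm5}. Recall that $f_6(x,y)=\max\left\{\frac{1}{x},\frac{1-x}{1-x-y}\right\}$, so $f_6(x,y)\leq 1+\phi$ holds if and only if each of the two terms in the maximum is bounded above by $1+\phi$. I would therefore treat the two inequalities $\frac{1}{x}\leq 1+\phi$ and $\frac{1-x}{1-x-y}\leq 1+\phi$ separately and intersect the two resulting regions.

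The first inequality is immediate: for $x>0$ we have $\frac{1}{x}\leq 1+\phi \Leftrightarrow x\geq \frac{1}{1+\phi}$, which is exactly the first defining constraint of $R(f_6)$. For the second inequality the only point requiring care is the sign of the denominator $1-x-y$; under the standing constraint $x+y\leq 1$ it is nonnegative, and since $f_6$ must be finite (rendezvous is achieved) we have $1-x-y>0$. Multiplying through by this positive quantity preserves the direction of the inequality, and an elementary rearrangement yields the second defining constraint.

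Concretely, I would record the two equivalences in a single display, as in the earlier lemmas:
\begin{align*}
\frac{1}{x} \leq 1+\phi
&\Leftrightarrow x \geq \frac{1}{1+\phi} \\
\frac{1-x}{1-x-y} \leq 1+\phi
&\Leftrightarrow y \leq \frac{\phi}{1+\phi}-\frac{\phi x}{1+\phi}.
\end{align*}
The bottom equivalence follows from $1-x\leq (1+\phi)(1-x-y)$, i.e.\ $(1+\phi)y\leq \phi(1-x)$, after dividing by $1+\phi$. Conjoining the two right-hand conditions gives precisely the region $R(f_6)$, which proves the lemma.

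The main (and essentially only) obstacle is the handling of the denominator $1-x-y$: one must verify it is strictly positive before clearing it, since otherwise the inequality direction could flip or the quantity $f_6$ could be undefined. This is handled by the feasibility constraint $x+y\leq 1$ together with the observation that a finite competitive ratio forces $1-x-y>0$.
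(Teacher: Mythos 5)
Your proof is correct and follows essentially the same approach as the paper's: split the maximum defining $f_6$ into the two inequalities $\frac{1}{x}\leq 1+\phi$ and $\frac{1-x}{1-x-y}\leq 1+\phi$ and rearrange each into the corresponding defining constraint of $R(f_6)$. Your explicit justification that $1-x-y>0$ before clearing the denominator is a small point of extra care that the paper's ``elementary calculations'' leave implicit, but it changes nothing substantive.
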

\begin{proof} (Lemma~\ref{lm6})
Recall that $f_6(x,y)=\max\{\frac{1}{x},\frac{1-x}{1-x-y}\}$. Therefore $f_6(x,y)\leq 1+\phi$ if and only if $\frac{1}{x}\leq 1+\phi$ and $\frac{1-x}{1-x-y}\leq 1+\phi$. 
Using elementary calculations we can show easily that
\begin{align*}
\frac{1}{x}  \leq 1+\phi
&\Leftrightarrow x   \geq  \frac{1}{1+\phi} \\
\frac{1-x}{1-x-y}   \leq 1+\phi
&\Leftrightarrow 0\leq \phi-(1+\phi)y-\phi x
\Leftrightarrow y\leq \frac{\phi}{1+\phi}-\frac{\phi x}{1+\phi}\\
\end{align*}
This proves the lemma.
\qed
\end{proof}

We now show that $\bigcup_{3\leq i\leq 6} R(f_i)$ covers the area $\{(x,y)| 0 \leq y \leq x \leq 1, x+y \leq 1\}$, which is the region where our optimal problem \ref{opt:eq} is defined.
\begin{lemma}
\label{lem:cover}
For any $(x,y)$ such that $0 \leq y \leq x \leq 1$ and $x+y \leq 1$, we have that $(x,y)\in \bigcup_{3\leq i\leq 6} R(f_i)$.
\end{lemma}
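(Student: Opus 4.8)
The plan is to collapse all four region descriptions into one common coordinate system by setting $c=\frac{1}{1+\phi}$, so that $1-c=\frac{\phi}{1+\phi}$. Using $\frac{1}{1+\phi}-\frac{x}{1+\phi}=c(1-x)$ and $\frac{\phi}{1+\phi}-\frac{\phi x}{1+\phi}=(1-c)(1-x)$, the regions of Lemmas \ref{lm3}--\ref{lm6} rewrite uniformly as $R(f_3)=\{x\le 1-c,\ y\ge c(1-x)\}$, $R(f_4)=\{x\ge c,\ y\ge c(1-x)\}$, $R(f_5)=\{y\le (1-c)-x\}$, and $R(f_6)=\{x\ge c,\ y\le (1-c)(1-x)\}$. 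In this form $R(f_3)\cup R(f_4)$ is exactly the half-plane above the line $L_1:\ y=c(1-x)$, while $R(f_5)$ and $R(f_6)$ both sit below $L_1$, which tells me to split the argument according to whether a point lies above or below $L_1$.

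Before the case split I would record the one golden-ratio identity the whole proof depends on: $c$ satisfies $c^2-3c+1=0$ (indeed $c=\phi^{-2}=\frac{3-\sqrt5}{2}$, using $1+\phi=\phi^2$), which rearranges to $c(1-c)=1-2c$ and hence to the pivotal equivalence $c(1-x)\le (1-c)-x \iff x\le c$, valid for $x\le 1$. I would also note $c<\frac{1}{2}$, so $c<1-c$ and therefore $[0,1-c]\cup[c,1]=[0,1]$.

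Now fix a point $(x,y)$ of the domain, so $0\le x\le 1$. In Case A, $y\ge c(1-x)$: if $x\le 1-c$ then $(x,y)\in R(f_3)$, and if $x\ge c$ then $(x,y)\in R(f_4)$; since $[0,1-c]\cup[c,1]=[0,1]$, one of the two holds. In Case B, $y<c(1-x)$: if $x\le c$ then the pivotal equivalence gives $c(1-x)\le (1-c)-x$, so $y<(1-c)-x$ and $(x,y)\in R(f_5)$; if instead $x\ge c$ then $c(1-x)\le (1-c)(1-x)$ (because $c\le 1-c$ and $1-x\ge 0$), so $y<(1-c)(1-x)$ and, together with $x\ge c$, we get $(x,y)\in R(f_6)$. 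Both subcases of Case B again cover all of $x\in[0,1]$, so the union contains $(x,y)$.

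The only genuine obstacle is the identity $c(1-c)=1-2c$; everything after it is a mechanical comparison of linear inequalities. I would therefore isolate that identity as a one-line preliminary computation from $\phi^2=\phi+1$, and then let the case analysis close automatically. As a sanity check, I note that the argument never invokes $y\le x$ or $x+y\le 1$: it in fact covers the whole strip $0\le x\le 1$, so in particular the triangle $\{0\le y\le x\le 1,\ x+y\le 1\}$ where Problem~\eqref{opt:eq} lives.
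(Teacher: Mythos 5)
Your proof is correct and is essentially the paper's own argument in cleaner notation: your ``pivotal equivalence'' $c(1-x)\le (1-c)-x \iff x\le c$ (with $c=\frac{1}{1+\phi}$) is exactly the paper's Inequality~\eqref{eq:lem-cover}, and your observation that $c(1-x)\le (1-c)(1-x)$ for $x\ge c$ is the paper's step passing from the $R(f_6)$ boundary to the $R(f_4)$ boundary. The only differences are cosmetic: you split first on whether $(x,y)$ lies above or below the line $y=c(1-x)$ and then on $x$, whereas the paper splits first on $x\le \frac{1}{1+\phi}$ versus $x\ge\frac{1}{1+\phi}$, and you note explicitly (as the paper's argument also implicitly shows) that the constraints $y\le x$ and $x+y\le 1$ are never needed.
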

\begin{proof}
(Lemma \ref{lem:cover})
For any $0\leq x\leq 1$, either $x\leq \frac{1}{1+\phi}$ or $x\geq \frac{1}{1+\phi}$. First let $x\leq \frac{1}{1+\phi}$. We prove that for $x\leq \frac{1}{1+\phi}$ we have the following.
\begin{equation}
\label{eq:lem-cover}
\frac{1}{1+\phi}-\frac{x}{1+\phi}\leq \frac{\phi}{1+\phi}-x.
\end{equation}
Note that the left term of Inequality \ref{eq:lem-cover} is the border of $R(f_3)$, and the right term is the border of $R(f_5)$. We have
\begin{align*}
\frac{1}{1+\phi}-\frac{x}{1+\phi}\leq \frac{\phi}{1+\phi}-x
&\Leftrightarrow 1-x\leq \phi-x-\phi x\\
&\Leftrightarrow x\leq \frac{\phi-1}{\phi}=\frac{1}{1+\phi}
\end{align*}
This proves Inequality \ref{eq:lem-cover}. If $y\leq \frac{1}{1+\phi}-\frac{x}{1+\phi}$ then by inequality (5) we have that $y\leq \frac{\phi}{1+\phi}-x$, and thus $(x,y)\in R(f_5)$.
Now let $y\geq \frac{1}{1+\phi}-\frac{x}{1+\phi}$. Since $x\leq\frac{1}{1+\phi}<\frac{\phi}{1+\phi}$ then by definition of $R(f_3)$ we have that $(x,y)\in R(f_3)$. 

We now consider the case where $x\geq \frac{1}{1+\phi}$. If $y\leq \frac{\phi}{1+\phi}-\frac{\phi x}{1+\phi}$ then by definition of $R(f_6)$ we have that $(x,y)\in R(f_6)$. Now let $y\geq \frac{\phi}{1+\phi}-\frac{\phi x}{1+\phi}$. Then $y\geq \frac{1}{1+\phi}-\frac{x}{1+\phi}$, and so by definition of $R(f_4)$ we have that $(x,y)\in R(f_4)$.
This proves that if the point $(x,y)$ is such that $0\leq y\leq x\leq 1$ and $x+y\leq 1$ then $(x,y)$ belongs at least one $R(f_i)$, $3\leq i\leq 6$. 
\qed
\end{proof}
%The optimal value in the closed region 
%$$
%f_5 (x,y) \leq \min \{ f_3 (x, y) , f_4 (x,y) ,f_6 (x,y) \}
%$$ 
%is obtained at the point $B =\left( \frac{3-\sqrt{5}}{2} , \sqrt{5} -2 \right)$ and is depicted in Figure~\ref{fig:maple}.
%

After discussing all the previous details, we are now in a position to prove the upper bound of $1+\phi$ for four robots two of which are faulty.

\begin{proof} (Theorem \ref{thm:4-robots-UB})
We state the following algorithm with competitive ratio at most $1+\phi$ for the rendezvous of four robots.

\vspace{-0.3cm}
\begin{algorithm}[H]
\caption{FRR (Four Robots Rendezvous):}
\label{alg:frr}
\begin{algorithmic}[1]
\State{The robots broadcast their coordinates.}
\State{Compute the distance between the first two robots, $x$, and the distance between the last two robots, $y$.}
\If{$(x,y)\in R(f_3)$}
\State{execute Rendezvous of Case 3.}
\EndIf
\If{$(x,y)\in R(f_4)\setminus R(f_3)$}
\State{execute Rendezvous of Case 4.}
\EndIf
\If{$(x,y)\in R(f_5)\setminus \bigcup_{3\leq i\leq 4}R(f_i)$}
\State{execute Rendezvous of Case 5.}
\EndIf
\If{$(x,y)\in R(f_6)\setminus \bigcup_{3\leq i\leq 5}R(f_i)$}
\State{execute Rendezvous of Case 6.}
\EndIf
\end{algorithmic}
\end{algorithm}
By Lemmas \ref{lm1}-\ref{lem:cover}, we can conclude that Algorithm \ref{alg:frr} has a competitive ratio of at most $1+\phi$.
\qed
\end{proof}

\end{document}